\DeclareFontFamily{U}{mathx}{\hyphenchar\font45}
\DeclareFontShape{U}{mathx}{m}{n}{
     <5> <6> <7> <8> <9> <10>
     <10.95> <12> <14.4> <17.28> <20.74> <24.88>
     mathx10
     }{}
\DeclareSymbolFont{mathx}{U}{mathx}{m}{n}
\DeclareMathAccent{\widecheck}    {0}{mathx}{"71}
\DeclareMathOperator{\Dom}{Dom}      
\DeclareMathOperator{\Tr}{Tr}                 
\newtheorem{assumption}{Assumption}[section]
\newtheorem{theorem}[assumption]{Theorem}
\newtheorem{lemma}[assumption]{Lemma}
\newtheorem{definition}[assumption]{Definition}
\newtheorem{remark}[assumption]{Remark}
\newcommand{\A}{\mathcal{A}}               
\renewcommand{\a}{\alpha}                     
\renewcommand{\b}{\beta}                        
\newcommand{\B}{\mathcal{B}}               
\newcommand{\C}{\mathbb{C}}               
\newcommand{\CC}{\mathcal{C}}            
\newcommand{\del}{\partial}                     
\newcommand{\Det}{\text{Det}}                 
\newcommand{\DD}{\mathcal{D}}            
\newcommand{\eps}{\varepsilon}             
\newcommand{\ga}{\gamma}                    
\renewcommand{\H}{\mathcal{H}}            
\newcommand{\half}{{\mathchoice{\thalf}{\thalf}{\shalf}{\shalf}}}
\renewcommand{\L}{\mathcal{L}}              
\newcommand{\<}{\langle}
\newcommand{\M}{\mathcal{M}}              
\newcommand{\N}{\mathbb{N}}               
\newcommand{\OO}{\mathcal{O}}            
\newcommand{\R}{\mathbb{R}}                
\newcommand{\RR}{\mathcal{R}}                
\newcommand{\set}[1]{\{\,#1\,\}}               
\newcommand{\shalf}{{\scriptstyle\frac{1}{2}}} 
\renewcommand{\SS}{\mathcal{S}}         
\newcommand{\thalf}{\tfrac{1}{2}}             
\newcommand{\tr}{\text{tr}}                         
\newcommand{\wt}{\widetilde}                  
\newcommand{\vc}{\vcentcolon =}             
\def\ee_#1{e_{{\scriptscriptstyle#1}}}       
\def\<#1,#2>{\langle#1\,,\,#2\rangle}          
\newcommand{\norm}[1]{\left\lVert#1\right\rVert}    
\newcommand{\be}{\begin{enumerate}}
\def\XXint#1#2#3{{\setbox0=\hbox{$#1{#2#3}{\int}$}
	\vcenter{\hbox{$#2#3$}}\kern-0.5\wd0}}
\begin{document}

\thispagestyle{empty}

\begin{center}
%

\vspace{3cm}

{\Large\textbf{Spectral action beyond the weak-field approximation}} \\
\vspace{0.5cm}

{\large  B. Iochum$^{1}$, C. Levy$^{2}$, D. Vassilevich$^{3,4}$}

\vspace{3cm}

\abstract{
The spectral action for a non-compact commutative spectral triple is computed covariantly in a gauge perturbation 
up to order 2 in full generality. In the ultraviolet regime, $p\to\infty$, the action decays as $1/p^4$ in any even 
dimension. 
}
\end{center}

\vspace{2cm}

\noindent
PACS numbers: 11.10.Nx, 02.30.Sa, 11.15.Kc\\
MSC--2000 classes: 46H35, 46L52, 58B34\\
CPH-SYM-00
\vspace{7cm}

{\small
\noindent $^1$ UMR 6207

-- Unit\'e Mixte de Recherche du CNRS, Aix-Marseille Université et Universit\'e du Sud Toulon-Var 

-- Laboratoire affili\'e \`a la FRUMAM -- FR 7332\\
$^2$ Department of Mathematical Sciences, University of Copenhagen, Denmark, levy@math.ku.dk \\
$^3$ CMCC -- Universidade Federal do ABC, Santo Andr\'e, S.P., Brazil, dmitry.vasilevich@ufabc.edu.br\\
$^4$ Also at the Department of Theoretical Physics, St.~Petersburg University, Russia
}

\newpage

\section{Introduction}
Recent advances \cite{CCM,ConnesMarcolli} in explaining some key features
of gravity and Standard Model through the spectral action of noncommutative
geometry brought this subject to a focus of interest in theoretical physics. 
In noncommutative geometry, all information is encoded in a spectral triple $(\A,\H,\DD)$, where $\A$ is an algebra 
acting on a Hilbert space $\H$ and $\DD$ is a selfadjoint operator on $\H$ which plays the role of a Dirac operator 
\cite{Book, ConnesMarcolli, Polaris}. In this approach, the action is the so-called spectral action introduced by 
Chamseddine and Connes \cite{CC,CC1,CC2}
\begin{align}
\label{spectralaction}
S(\DD,\Lambda,f)=\Tr\big( f( \DD^2 /\Lambda^2)\big)
\end{align} 
and defined for $\Lambda \in \R^+$ which plays the role of a cut-off (and needed to make $\DD/\Lambda$ 
dimensionless) and for a function $f$ such that, of course, 
$ f( \DD^2 /\Lambda^2)$ is trace-class. In general, one chooses $f\geq 0$ since the action 
$\Tr\big( f( \DD^2 /\Lambda^2)\big)\geq 0$ will have the correct sign for an Euclidean action. This action is the 
appropriate one in the framework of noncommutative geometry to reproduce several physical situations like the 
Einstein-Hilbert action in gravitation or the Yang-Mills-Higgs action in the standard model of particle physics 
\cite{ ConnesMarcolli}, and the positivity of the function $f$ implies positivity of actions for gravity, Yang-Mills or 
Higgs couplings, and the Higgs mass term is negative.

Till the end of this Section we shall present a non-technical summary of our results to give a more physics-oriented 
reader a chance to appreciate them without going through the mathematics of the rest of this paper.
 
Let $M=\mathbb{R}^{2m}$ be an even dimensional real plane, $d=2m\geq 2$, endowed with a spin structure given 
by the spinor bundle $S=\C^{2^m}$. We denote by $\DD$ the free Dirac operator and by  $\DD_A$ the standard 
Dirac operator with a gauge connection $A$ acting on the Hilbert space $\H\vc L^2(M,S)$. 
We will use the notations and conventions from \cite[eq. (3.26)]{Vassilevich:2003xt}, namely in local coordinates
\begin{align}
\label{Dirac}
\DD\vc i\gamma^\mu \partial_\mu,
\qquad  \DD_A \vc i\gamma^\mu \nabla_\mu \vc i\gamma^\mu (\partial_\mu + A_\mu)
\end{align}
where $A_\mu \in \Gamma\big(M,\text{End}(S)\big)$ is taken in some representation of complex dimension $N$ of the gauge group. 
We assume that $A_\mu$ with field strength 
$F_{\mu\nu}\vc \partial_\mu A_\nu - \partial_\nu A_\mu +[A_\mu,A_\nu ]$ satisfies for a given 
$\rho>d$,
\begin{align}
\label{hyp}
\vert A_{\mu}(x) \vert \leq c\,(1+\vert x\vert )^{-\rho}\,, \qquad \del^\b A_\mu(x)=\mathcal{O}(1)\, ,
\end{align}
for all $\b\in \N^d$ with $|\b|\leq m$. This hypothesis will be justified in Section \ref{Spectral shift function}.
Since the manifold $M=\mathbb{R}^{2m}$ is non-compact, $f(\DD^2)$ is never trace-class, so, in order to 
get rid of trivial volume divergence, we modify \eqref{spectralaction} as
\begin{equation}
\label{spectral action}
S(\DD_A,\Lambda,f)\vc\Tr\big( f( \DD_A^2 /\Lambda^2)-f(\DD^2/\Lambda^2) \big).
\end{equation}
It is known that $(\SS(\R^d),\H,\DD)$ is a spectral triple with the non-unital algebra of Schwartz functions, 
see definition in \cite{CGRS,GGBISV} and \cite{GGBISV} for a proof. Other algebras are possible, like smooth 
functions on $\R^d$ with arbitrary partial derivatives bounded and integrable. 
Moreover, \eqref{spectral action} can be seen as a variation of the spectral actions when a one-form is turned on, 
which also makes sense for non-unital spectral triples. Actually, the two spectral triples 
$(\A,\H,\DD)$ and $(\A,\H,\DD_A)$ represent the same geometry, because obtained via Morita equivalence 
\cite{ConnesMarcolli,Polaris}. In the heat kernel asymptotics when $\Lambda \to \infty$, the coefficient of 
$\Lambda^0$ will be given by $\zeta_{\DD_{A}}(0)-\zeta_\DD(0)$ which has been 
computed in \cite{CC1} (in case $\A$ has a unit) and the kernels of $\DD_A$ and $\DD$ will not appear. 
Unfortunately, this does not give access to a definition of spectral action
for a generic non-compact spectral triple, but it avoids the use of an extra smearing 
function $g$ which can regularize $\Tr \big(g\,f(\DD^2/\Lambda^2) \big)$. Such $g$ has been used for instance in 
\cite{GilkeyNew, Vassilevich:2003xt,GI}. 

The trace formula \eqref{trace formula} gives  
$S(\DD_A,\Lambda,f)=\int_{-\infty}^\infty \xi(\lambda)\,f'(\lambda)\,d\lambda$ for the spectral shift function 
$\xi(\lambda)$. Thus \eqref{spectral action} makes sense for a large class $\CC_1$ of functions $f$ 
described in Section \ref{Spectral shift function}.

The spectral action is quite well studied in the framework of the \emph{weak-field expansion}, which is 
constructed in the following way (see \cite{ConnesMarcolli} for a detailed treatment).
It is assumed now that $f$ is a Laplace transform of a function $\varphi$:
\begin{equation}
f(z)=L\varphi(z)\vc\int_0^\infty dt\, e^{-tz} \varphi (t)\,.\label{Laplace}
\end{equation}
Then,
\begin{equation}
S(\DD_A,\Lambda,f)=\int_0^\infty dt\,\tilde K(t/\Lambda^2,\DD_A^2)\, \varphi (t)\,,
\label{SHK}
\end{equation}
where
\begin{equation}
\tilde K(s,\DD_A^2)=\Tr \Big( e^{-s\DD_A^2}-e^{-s\DD^2} \Big)
\label{subHK}
\end{equation}
is the subtracted heat kernel, which has an asymptotic expansion as $s\to +0$
\begin{equation}
\tilde K(s,\DD_A^2) \, \underset{s\downarrow 0}{\sim} \,\sum_{k=1}^\infty s^{-m+k}\,a_{2k}(\DD_A^2).
\label{asymp}
\end{equation}
Then
\begin{equation}
\label{asS}
S(\DD_A,\Lambda,f) \, \underset{s\downarrow 0}{\sim} \, \sum_{k=1}^\infty \Lambda^{2(m-k)} \,
\varphi_{2k} \, a_{2k}(\DD_A^2)
\end{equation}
with
\begin{equation}
\varphi_{2k}=\int_0^\infty dt\, t^{-m+k}\varphi (t) \,.\label{phi2k}
\end{equation}
The heat kernel coefficients $a_n$ are very well known at least in the commutative
case \cite{GilkeyNew,Vassilevich:2003xt}. Let us assign canonical mass dimension $1$
to $A_\mu$ and to the derivative. Each coefficient $a_{2k}$ is an integral of a polynomial
of the connection $A_\mu$ and its' derivatives of the total canonical dimension $2k$.
The expansion (\ref{asS}) is valid therefore when the fields and their derivatives
are small compared to $\Lambda$. Hence, this is a weak-field expansion.

The expansions (\ref{asymp}) and (\ref{asS}) do not contain the term with $a_0$,
which cancels out between the $\DD_A$ and $\DD$. This is, however, not a generic
feature of spectral actions, but is rather a consequence of taking the same
flat Euclidean metric on $\mathbb{R}^d$ in both operators. In general, one has to
allow for metric perturbations, see e.g. \cite{ConnesMarcolli}. Let us consider
the metric $g_{\mu\nu}=\delta_{\mu\nu}+h_{\mu\nu}$ where the fluctuations 
$h_{\mu\nu}$ over the unit metric are well localized (fall off sufficiently fast
at infinity). Let us introduce this metric in $\DD_A$ through suitably defined
$\gamma$-matrices and add corresponding connection terms. Let us leave $\DD$ as above.
At this point we have to assume that the trace in (\ref{spectral action}) exists,
though the case of metric fluctuations is not covered by the analysis of Section
\ref{Spectral shift function}.
Then the summation in (\ref{asymp}) and (\ref{asS}) have to be extended to $k=0$
with $a_0\sim \int dx (\sqrt{\det{g}} -1)$. Variation of $h_{\mu\nu}$ in the $a_0$ term in the
spectral action produces the standard cosmological term in the equations of
motion. Note, that the Einstein action with a cosmological term is always infinite
on its' classical solutions on non-compact manifolds. Therefore, the necessity
to subtract a contribution from some reference metric is well understood in
General Relativity. For us here, it is only important that the $a_0$ term reappears
in the spectral action with metric fluctuations and gives rise to the cosmological
constant. This fact will be used for a physical interpretation of our results, see
Remark \ref{dual}.

If the spectral action is to be taken seriously, one should also study
it beyond the weak-field approximation. In particular, the terms which depend quadratically
on the field strength and contain arbitrary number of derivatives govern the ultraviolet
behavior of propagators and are utterly important for quantization. 
One observes that, if one restricts himself to the second order in $F_{\mu\nu}$, the heat kernel
coefficients have the form $a_{2k+4}\sim F_{\mu\nu}\Delta^k F_{\mu\nu}$
where $\Delta=-\partial_\mu\partial^\mu$ is the free Laplacian.
After a Fourier transform, this can be translated to $a_{2k+4}\sim \hat F_{\mu\nu}(-p)\, p^{2k} \hat F_{\mu\nu}(p)$.
Then, it was suggested in \cite{Suijlekom2} to pick up a function $f$
such that the coefficients $\varphi_{2k}$ vanish for $2k>2m+N$, while $\varphi_{2m+N}\ne 0$
with a sufficiently large $N$. The spectral action restricted to first few coefficients in the heat kernel
expansion displays the polynomial growth $\sim p^N$ and the corresponding propagators decay at large momenta.
Power counting arguments show that the restricted Yang-Mills theory becomes super-renormalizable 
\cite{Suijlekom2}. 

Using the Barvinsky-Vilkovisky approach of covariant perturbation theory 
\cite{Barvinsky:1987uw,Barvinsky:1990up}, the aim of this 
work is to compute the action \eqref{spectral action} as a function of the field strength $F$ of the connection 
$A$ in full generality, then to control its ultraviolet regime. We restrict ourselves to the second order in $F$. 
To this order, we derive a remarkably simple formula for the spectral spectral action for a large class of function  $f$. 
Then, we show that the spectral action decays as $1/p^4$ in the ultraviolet asymptotics. Therefore, the 
propagator of the Yang-Mills field grows at large momenta, and the full Yang-Mills spectral action
is not super-renormalizable in contrast to its expansion considered in \cite{Suijlekom2}.

This paper is organized as follows. 
Section \ref{Spectral shift function} recalls few facts on the regularization deduced from Krein spectral shift function 
and conditions on $f$ which guarantee that $f( \DD_A^2 /\Lambda^2)-f(\DD^2/\Lambda^2)$ is a trace-class 
operator. Few known results where $f$ is piecewise continuous are quoted. We use the $G$-pseudodifferential 
calculus to study some sufficient condition on the connection $A$ that gives a trace-class resolvent perturbation. 

In Section \ref{sec-spact2}, we use the covariant perturbation expansion to compute the spectral action to the 
second order in $F(A)$ and its ultraviolet asymptotics. Our main result for the spectral action at the second
order in $F$ reads 
\begin{equation*}
S(\DD_A,\Lambda,f)(F) =
 \tfrac{\Lambda^{d-4}}{(4\pi)^m}\int_M  d^{2m}p\, {\rm tr} \bigl[ \hat F^{\mu\nu}(-p) \,w_\Lambda(p^2)\, \hat F_{\mu\nu}(p) \bigr] 
+ \mathcal{O}(F^3)\,
\end{equation*}
with the form-factor $w_\Lambda(p^2)$ given in Lemma \ref{lemmawL} as
\begin{equation*}
w_\Lambda(p^2)=(-1)^{m-1}\,2^{m-2}\int_0^1d\a\,\Big[f^{[m-2]} \big(\a(1-\a)\Lambda^{-2}p^2\big)
-\tfrac{2\Lambda^{2}}{p^2} \int^{\a(1-\a)\Lambda^{-2}p^2}_0 ds_1\, f^{[m-2]}(s_1) \Big]. 
\end{equation*}
Here $f^{[n]}$ is the $n$th primitive of the function $f$. Dependence on the dimension $d=2m$ of
underlying space resides in the overall numerical factor and in the number of repeated integrations of
$f$. The ultraviolet asymptotics
(Theorem \ref{Result}) of $w_\Lambda(p^2)$ depends on $m$ through a numerical factor only.

Section \ref{sec-ass} discusses how to relax hypothesis on the function $f$. In particular, if $f$ is not assumed 
to be positive, one can construct a spectral action which decays faster as $1/p^4$. We also treat a 
step-function cut off. 

Higher terms in the $F$-expansion are briefly considered in section \ref{sec-ht}. 

Concluding remarks are given in section \ref{sec-con}.

\section{Non-compact spectral action and spectral shift function}
\label{Spectral shift function}
The main purpose of this Section is to show that the trace in (\ref{spectral action})
exists under certain assumptions on $f$ and $A$. We have start with a short tour into
the theory of Krein spectral shift function.

\subsection{Spectral shift function}

The Krein spectral shift functions are widely used in scattering theory, see 
\cite{Pushnitski09,Pushnitski10,Simon,Yafaev92,Yafaev05,Yafaev07}.  
For a pair of selfadjoint operators $H_0,H$ on a Hilbert space $\H$, the spectral shift function  
$\xi(\lambda)$ is defined by the Lifshits trace formula 
\begin{align}
\label{trace formula}
\Tr\big(f(H)-f(H_0)\big)=\int_{-\infty}^\infty d\lambda\,\, \xi(\lambda)\,f'(\lambda).
\end{align}
The idea behind is that, if $f(\lambda)=\chi_{]-\infty,\lambda[}$, then $\xi(\lambda)=
-``\Tr "\big( P(]-\infty,\lambda[)-P_0(]-\infty,\lambda[)\big)$ 
where $P, P_0$ are the spectral projections of $H,H_0$ and $``\Tr"$ is some regularized trace, so  
$\xi(\lambda)$ appears to be a regularization of the difference of eigenvalue counting functions. 
Remark that $\xi(\lambda-0)$ is equal to the index of the Fredholm 
pair $P(]-\infty,\lambda[)$ and $P_0(]-\infty,\lambda[)$ and coincides with the spectral flow of $H$ and $H_0$ 
when $\lambda$ is in the discrete spectrum of $H_0$. 
These are different regularizations of $\Tr \big( P(]-\infty,\lambda[)-P_0(]-\infty,\lambda[)\big)$: the $\xi(\lambda)$ 
function is the regularization obtained by replacing the difference of spectral projections by 
$f(H)-f(H_0)$, where $f$ is a smooth approximation of $\chi_{]-\infty,\lambda[}$, while the index is obtained by 
replacing $\Tr$ by index. These two regularizations does not coincide when $\lambda$ is in 
the essential spectrum of $H_0$ since the function $\xi$ is not always integer-valued as the index.

Assume that $V \vc H-H_0$ is in $\L^1(\H)$ (trace-class operators). If $R(X)(z)$ is the resolvent of the operator $X$, 
the function $D(z)\vc \Det \big(1+V\,R(H_0)(z)\big)$ is holomorphic and,
moreover, satisfies $D^{-1}(z)D'(z)=\Tr\big(R(H_0)(z)-R(H)(z)\big)$. Defining 
\begin{align*}
\xi(\lambda;H,H_0)\vc \pi^{-1}\,\underset{\epsilon \downarrow 0}{\lim} \, \arg\, D(\lambda+i\epsilon)
\end{align*} 
for almost all 
$\lambda$, we get the following: 

\centerline{$\log\,D(z)=\int_{-\infty}^\infty d\lambda\,\,\xi(\lambda)\,(\lambda-z)^{-1}$ for 
$\Im(z)\neq 0$, $\int_{-\infty}^\infty d\lambda\, \vert \xi(\lambda)\vert\leq \norm{V}_1$ and 
$\Tr(V)=\int_{-\infty}^\infty d\lambda\,\,\xi(\lambda)$.} Moreover, the trace formula holds at least for functions 
$f \in C^\infty_c(\R)$ which are smooth with compact support.

Clearly, $\xi$ deserves its name of spectral shift since when $\lambda$ is an isolated eigenvalue for $H$ 
and $H_0$, with multiplicity $n$ and $n_0$, $\xi(\lambda+0)-\xi(\lambda-0)=n_0-n$ and $\xi$ gets 
constant integer values in any interval located in the resolvent sets of both $H$ and $H_0$.

Before applying this to our situation, we assume the existence of $c\in \R$ such that $H_0+c1$ and $H+c1$ 
are positive definite, and 
\begin{align}
\label{Resolvent}
\big(R(H)\big)^n(z)-\big(R(H_0)\big)^n(z)\in\L^1(\H)
\end{align}
for some $z\le -c$ and $n\in\N$. 

Now, define $\xi$ by 
\begin{align}
\label{xi}
\xi(\lambda) \vc -\xi \big((\lambda+c)^{-n};(H+c1)^{-n},(H_0+c1)^{-n}\big)
\end{align} 
for $\lambda>-c$ and $\xi(\lambda)=0$ for $\lambda \leq -c$. \\
Let $\CC_{1,n}$  be the set of functions $f$ having two locally bounded derivatives and satisfying 
\begin{align*}
\big(\lambda^{n+1}\,f'(\lambda)\big)'(\lambda)\, \underset{\lambda \to \infty}{\sim}\,\OO(\lambda^{-1-\epsilon})  
\quad \text{for some }\epsilon >0.
\end{align*}
When $f\in \CC_{1,n}$,  $f(H)-f(H_0)\in \L^1(\H)$ and the trace formula \eqref{trace formula} holds true.

\begin{remark}
For instance, $f$ defined by $f(\lambda)=(\lambda-z)^{-n}$ for $z$ in the resolvent set of $H$ and $H_0$ is in 
$\CC_{1,n}$; typically, $f_r(\lambda)=(\lambda+a)^{-r}$ is in $\CC_{1,n}$ for $r>n$ and $a>0$.

Moreover,  
$\Tr\big(R^n(z)-R_0^n(z)\big)=-n\int_{-\infty}^\infty d\lambda \,\xi(\lambda)\,(\lambda-z)^{-(n+1)}$. 
Similarly, for $b<\inf \text{Spect}(H)$,
\begin{align*}
\Tr\big(e^{-t\,H}-e^{-t\,H_0}\big)&=-t\int_{-\infty}^\infty\ d\lambda\,\,\xi(\lambda)\,e^{-t\,\lambda} \\
&=(2\pi i)^{-1}(n-1)!\;t^{-n+1}\int_{b-i\infty}^{b+i\infty} dz\,\Tr\big(R^n(z)-R_0^n(z)\big) \,e^{-t\,z}.
\end{align*}
\end{remark}

For any triplet of selfadjoint operators $H_0, H_1,H_2$ with trace-class differences, the equality 
$\xi(\lambda;H_2,H_0)= \xi(\lambda;H_2,H_1)+\xi(\lambda;H_1,H_0)$ is equivalent to the additivity of 
spectral action \eqref{spectral action}.

We shall now apply the above results to $H=\DD_A^2$ and $H_0=\DD^2$. It is known that $H_0$ has a purely 
absolutely continuous spectrum $\R^+$ with infinite multiplicity for $d \geq 2$.

\subsection{\texorpdfstring{Trace-class resolvent perturbation and $G$-pseudodifferential calculus}{Trace-class 
resolvent perturbation and G-pseudodifferential calculus}}

The goal of this section is to prove the following
\label{Gsec}
\begin{theorem}
\label{theoremA1}
Let $H=\DD_A^2$ and $H_0=\DD^2$ defined in \eqref{Dirac}, with $A$ satisfying \eqref{hyp}.

If $c>0$,
 then condition \eqref{Resolvent} 
is satisfied with $n=m(=d/2)$, so that $\xi$ as defined in \eqref{xi} exists and Eq.\ \eqref{trace formula} holds true for 
$f\in \CC_1:=\CC_{1,m}$. In particular, the variation of the non-compact spectral action \eqref{spectral action} is 
well-defined for any $f\in \CC_{1}$ and $\Lambda\in \R^+$.
\end{theorem}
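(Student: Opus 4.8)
The plan is to reduce the statement to a resolvent estimate proved inside the $G$-pseudodifferential calculus. We want to show that $\big(R(\DD_A^2)(z)\big)^m-\big(R(\DD^2)(z)\big)^m\in\L^1(\H)$ for some $z\le -c<0$, and then invoke the abstract machinery recalled in the previous subsection: once \eqref{Resolvent} holds with $n=m$, the function $\xi$ of \eqref{xi} is well-defined, and for $f\in\CC_{1,m}$ one has $f(H)-f(H_0)\in\L^1(\H)$ together with the trace formula \eqref{trace formula}. The last sentence of the theorem is then immediate, since the non-compact spectral action \eqref{spectral action} is exactly $\Tr\big(f(\DD_A^2/\Lambda^2)-f(\DD^2/\Lambda^2)\big)$ and $\lambda\mapsto f(\lambda/\Lambda^2)$ lies in $\CC_{1,m}$ whenever $f$ does (the rescaling preserves the asymptotic condition defining $\CC_{1,n}$, up to the positive constant $\Lambda^{-2}$).

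First I would set $H_0=\DD^2=-\partial_\mu\partial^\mu\otimes\unb$, so $R(H_0)(z)=(H_0-z)^{-1}$ is, for $z<0$, a classical $\PsiDO$ of order $-2$ with symbol $(|\xi|^2-z)^{-1}\otimes\unb$; similarly $H=\DD_A^2$ differs from $H_0$ by a first-order differential operator whose coefficients are built from $A_\mu$, $\partial_\mu A_\nu$ and $A_\mu A_\nu$, all of which decay like $(1+|x|)^{-\rho}$ (or are controlled by it, using the second part of \eqref{hyp}) with $\rho>d$. The key algebraic step is the resolvent identity iterated $m$ times: writing $V=H-H_0$,
\begin{align*}
R(H)^m-R(H_0)^m=\sum_{j=0}^{m-1}R(H_0)^j\,\big(R(H)-R(H_0)\big)\,R(H)^{m-1-j},
\qquad R(H)-R(H_0)=-R(H_0)\,V\,R(H),
\end{align*}
so that each summand is, schematically, $R(H_0)^{j+1}\,V\,R(H)^{m-j}$. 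Using the $G$-calculus one sees that $R(H_0)^{j+1}$ is a $\PsiDO$ of order $-2(j+1)$ with polynomially decaying (indeed Schwartz-in-$\xi$) symbol, $V$ contributes decay $(1+|x|)^{-\rho}$ in $x$ and growth of order $1$ in $\xi$, and $R(H)^{m-j}$ has order $-2(m-j)$; multiplying, each term is a $G$-$\PsiDO$ of order $-2m+1\le -1-d$ (for $d=2m\ge2$) whose symbol decays in $x$ like $(1+|x|)^{-\rho}$ with $\rho>d$. A standard criterion in the $G$-calculus — a $\PsiDO$ whose symbol is $O\big((1+|x|)^{-\rho}(1+|\xi|)^{-\sigma}\big)$ with $\rho,\sigma>d$ is trace-class on $L^2(\R^d)$ — then gives membership in $\L^1(\H)$. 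This is where the hypothesis $\rho>d$ in \eqref{hyp} is used, and it is exactly the reason \eqref{hyp} was imposed; the bounds on $\partial^\beta A_\mu$ for $|\beta|\le m$ ensure that the symbols of the $V$ factor and of $R(H)$ genuinely belong to the $G$-symbol classes, so the composition formula applies.

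The main obstacle — and the step deserving care — is handling $R(H)=(\DD_A^2-z)^{-1}$ itself, since $\DD_A^2$ is not a differential operator with compactly supported perturbation but has variable coefficients at infinity. The point is that $\DD_A^2-z$ is $G$-elliptic: its principal $G$-symbol is $(|\xi|^2-z)\otimes\unb$ plus lower-order terms that vanish at spatial infinity, so for $z<0$ it is invertible modulo $G$-smoothing operators, and a parametrix construction inside the $G$-calculus produces $R(H)(z)$ as a $G$-$\PsiDO$ of order $-2$ (the genuine resolvent agreeing with the parametrix up to a trace-class, in fact smoothing, remainder). Here one needs $c>0$, i.e. $z\le -c<0$, both to stay off the spectrum $\R^+$ of $H_0$ (and of $H$, which is a relatively compact perturbation so also has essential spectrum $\R^+$) and to make the symbol $(|\xi|^2-z)^{-1}$ nonsingular and rapidly decaying in $\xi$ uniformly. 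Once $R(H)$ is known to be a $G$-$\PsiDO$ of order $-2$ with the right decay, the composition estimates above close the argument, and the theorem follows.
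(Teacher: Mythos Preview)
Your overall strategy is sound and is indeed what the paper does in spirit, but there is a genuine gap at the point where you place $\DD_A^2$, $V$, and $R(H)$ inside the $G$-calculus. The hypothesis \eqref{hyp} only says $|A_\mu(x)|\lesssim\langle x\rangle^{-\rho}$ while the derivatives $\partial^\beta A_\mu$ (for $|\beta|\le m$) are merely \emph{bounded}, not decaying. Hence $A_\mu\notin G^{0,-\rho}$ in general (membership would require $|\partial_x^\beta A_\mu|\lesssim\langle x\rangle^{-\rho-|\beta|}$), so $V=\DD_A^2-\DD^2$ is not in $OPG^{1,-\rho}$, $\DD_A^2$ is not in $OPG^{2,0}$, and the parametrix construction you invoke to show $R(H)\in OPG^{-2,0}$ does not apply. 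The paper is explicit about this distinction: the remark following its Lemma~\ref{lemGcalcul} says that assuming $A\in OPG^{0,-\rho}$ is a \emph{stronger} case in which the trace-class property indeed follows ``more directly'' --- essentially by your argument. (There is also a sign slip in your order count: the $\xi$-order of $R(H_0)^{j+1}VR(H)^{m-j}$ is $-2(j+1)+1-2(m-j)=-2m-1=-d-1$, not $-2m+1$; the correct value is what you need.)

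The paper's way around this is worth learning. It works at the level of the first-order operator $P=\DD$ and the zeroth-order perturbation $A$, and introduces an abstract class $\RR_\rho$ of bounded operators that (i) preserve $H^k$ for $0\le k\le m$ and (ii) map $L^{2,-s}\to L^{2,-s+\rho}$; the hypothesis \eqref{hyp} is exactly enough to put the multiplication operator $A$ into $\RR_\rho$. An abstract resolvent lemma then reduces the trace-class question for $R_P^m-R_{P+A}^m$ to that of $R_P^{\,j}(AP+PA+A^2)R_P^{\,m+1-j}$, i.e.\ expressions involving only the \emph{free} resolvent $R_P$ --- the perturbed resolvent is eliminated via boundedness of $R_P^{-k}R_{P+A}^k$. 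Finally, the weighted-$L^2$ mapping property is used to factor $A=M_{-2j}\,B\,M_{-\rho+2j}$ with $B$ bounded, and one concludes by a H\"older/Schatten argument: $R_P^{\,j}M_{-2j}\in\L^{p_1}$ and $M_{-\rho+2j}PR_P^{\,m+1-j}\in\L^{p_2}$ with $p_1^{-1}+p_2^{-1}\ge 1$, using the $G$-calculus criterion $OPG^{p,q}\subset\L^r$ for $p,q<-d/r$. This is precisely what allows the weak derivative control in \eqref{hyp} to suffice.
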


We start with a lemma about resolvent perturbations for abstract selfadjoint operators on a Hilbert space. 
We denote $R_T:c\in \R^+ \to (T^2+c)^{-1}\in \B(\H)$ for any selfadjoint operator $T$ on a Hilbert space $\H$.

\begin{lemma}
\label{lemresolvent}
Let $\H$ be a Hilbert space, $P$ be an unbounded selfadjoint operator on $\H$ and 
denote for any $k\in \N$, $H_P^k$ the Hilbert space based on $\Dom P^k$, endowed with the 
scalar product 
$$
(\psi,\phi)_k:= ((P-i)^k\psi, (P-i)^k \phi)_\H +(\psi,\phi)_\H. 
$$
Let $n\in \N$ and $A$ be a selfadjoint bounded operator on $\H$ such that $A$ sends continuously $H_P^k$ 
into itself for any $k\in \set{0,\cdots n}$.
\\
The following holds for any $k\in \set{0,\cdots,n}$:

(i) $H_{P+A}^k = H_P^k$ (with equivalent norms).

(ii) For any $c>0$, $(R_P(c))^{-k} (R_{P+A}(c))^k $ is a bounded operator on $\H$. Moreover,
$$
\big(R_{P}(c))^k-(R_{P+A}(c)\big)^k = \sum_{j=1}^{k} \big(R_P(c)\big)^{j} (AP+PA+A^2) 
\big(R_{P+A}(c)\big)^{k+1-j}\, .
$$

(iii) The operator $\big(R_{P}(c)\big)^k-\big(R_{P+A}(c)\big)^k$ is trace-class on $\H$ if for any $j\in \set{1,\cdots,k}$, 
the operator $\big(R_P(c)\big)^{j} (AP+PA+A^2) \big(R_{P}(c)\big)^{k+1-j}$ is trace-class on $\H$. 

\end{lemma}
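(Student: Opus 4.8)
The plan is to prove the three statements (i), (ii), (iii) in sequence, since each builds on the previous one. The central algebraic identity in (ii) is a telescoping-sum formula, and the analytic content is controlling domains and boundedness of the operators appearing in it.

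\medskip

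\textbf{Step 1: the domain statement (i).} First I would observe that since $A$ is bounded and selfadjoint, $P+A$ is selfadjoint on $\Dom P$, so at least $H_{P+A}^1$ and $H_P^1$ coincide as sets with equivalent norms (the graph norms of $P$ and $P+A$ differ by a bounded perturbation). For general $k$ I would argue by induction: assuming $H_{P+A}^j = H_P^j$ with equivalent norms for $j\le k-1$, I want $\Dom (P+A)^k = \Dom P^k$. The key is to expand $(P+A)^k$ as $P^k$ plus a sum of terms each of which is a product of $k$ factors, each factor being either $P$ or $A$, with at least one $A$. Using the hypothesis that $A$ maps $H_P^j$ continuously into itself for $j\le n$, one checks that each such mixed term is a bounded operator from $H_P^k$ to $\H$; hence $(P+A)^k - P^k$ is $P^k$-bounded with relative bound that can be made $<1$ after shifting, or more simply one shows directly that $\Dom P^k \subseteq \Dom (P+A)^k$ and the norms are comparable. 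Symmetry in $A \leftrightarrow -A$ (replacing $P+A$ by $(P+A) + (-A)$) gives the reverse inclusion. The main bookkeeping obstacle here is making the "$A$ preserves $H_P^j$" hypothesis interact correctly with the non-commuting product expansion; one must be careful that intermediate factors land in the right Sobolev-type space $H_P^j$ so that the next $A$ or $P$ can act.

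\medskip

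\textbf{Step 2: the resolvent identity (ii).} Write $Q := P+A$, $r := R_P(c) = (P^2+c)^{-1}$, $r' := R_Q(c) = (Q^2+c)^{-1}$, and note $Q^2 - P^2 = PA + AP + A^2 =: B$, which by Step 1 (and the hypotheses on $A$) is a bounded operator from $H_P^1$ to $\H$, equivalently $r B$ and $B r$ are bounded on $\H$, so $r' (r)^{-1}$ and $(r)^{-1} r'$ make sense as bounded operators — this gives the boundedness claim once we have the $k=1$ case. For $k=1$: $r - r' = r(Q^2+c - P^2 - c) r' = r B r'$, the standard second resolvent identity, valid because both sides are bounded and the domains match by (i). For general $k$, I would telescope:
\begin{equation*}
r^k - (r')^k = \sum_{j=1}^{k} r^{j} (r - r') (r')^{k-j} = \sum_{j=1}^{k} r^{j}\, B\, (r')^{k+1-j},
\end{equation*}
using the $k=1$ identity in each summand; one must check that every operator product here is genuinely bounded, which again follows from (i) together with the fact that $r^j$ maps $\H$ boundedly into $H_P^{2j}\subseteq H_P^1$ and $B$ maps $H_P^1$ to $\H$. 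The boundedness of $(r)^{-k}(r')^{k}$ then follows either by rearranging this identity or by iterating the $k=1$ boundedness of $(r)^{-1}r'$ together with (i).

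\medskip

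\textbf{Step 3: the trace-class criterion (iii).} This is the easiest part once (ii) is in hand. Given the hypothesis that $r^{j} B\, r^{k+1-j}$ is trace-class for each $j\in\{1,\dots,k\}$, I want to deduce that each summand $r^{j} B (r')^{k+1-j}$ of the identity in (ii) is trace-class. Write $r^{j} B (r')^{k+1-j} = \big(r^{j} B\, r^{k+1-j}\big)\,\big(r^{-(k+1-j)} (r')^{k+1-j}\big)$; the first factor is trace-class by hypothesis and the second is bounded by (ii) (applied with exponent $k+1-j \le k \le n$). Since the trace-class operators form a two-sided ideal in $\B(\H)$, the product is trace-class, and a finite sum of trace-class operators is trace-class, so $r^k - (r')^k$ is trace-class. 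I expect Step 1 to be the genuine obstacle — the combinatorial/domain argument for preserving the tower of spaces $H_P^k$ under the perturbation — whereas Steps 2 and 3 are formal manipulations with resolvent identities and the trace-class ideal property, contingent only on the boundedness bookkeeping established in Step 1.
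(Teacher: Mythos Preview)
Your proof is correct and follows the same architecture as the paper's: induction on domains for (i), an algebraic resolvent identity plus boundedness for (ii), and the ideal property of $\L^1(\H)$ for (iii). The one methodological difference worth noting is in how the identity in (ii) is obtained: you telescope $r^k-(r')^k$ and insert the second resolvent identity $r-r'=rBr'$, whereas the paper differentiates the $k=1$ identity $R_P(c)-R_{P+A}(c)=R_P(c)\,B\,R_{P+A}(c)$ repeatedly in the parameter $c$, using $\tfrac{d}{dc}R(c)=-R(c)^2$. Your telescoping is slightly more elementary (no operator-valued differentiation to justify); the paper's derivation is shorter once that differentiation is granted. For (i) you use the symmetry $A\leftrightarrow -A$ for the reverse inclusion, while the paper instead applies the bounded inverse theorem to $(P+A-i)^k(P-i)^{-k}$; both are valid.

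One small slip: your intermediate telescope $\sum_{j=1}^{k} r^{j}(r-r')(r')^{k-j}$ has an index error. The correct non-commutative telescope is $\sum_{j=1}^{k} r^{j-1}(r-r')(r')^{k-j}$, and substituting $r-r'=rBr'$ then yields exactly your final (correct) formula $\sum_{j=1}^{k} r^{j}B(r')^{k+1-j}$.
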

\begin{proof}
$(i)$  Let us check by induction that 
$\Dom P^k= \Dom (P+A)^k$. It is clearly true for $k=0,1$. Suppose that it holds for a given $1\leq k\leq n-1$. 
We see that if $\psi\in \Dom (P+A)^{k+1}$ then $\psi\in \Dom P^k$ is such that $(P+A)\psi \in \Dom P$. But since 
$A\Dom P^k \subseteq \Dom P^k \subseteq \Dom P$, we get $P\psi\in \Dom P$, which implies 
$\psi\in \Dom P^{k+1}$. Similarly, we get the other inclusion $\Dom P^{k+1}\subseteq \Dom (P+A)^{k+1}$. 

We now look at the norm equivalence. Fix $k$ with $0\leq k \leq n$. We first claim that the operator 
$B:=(P+A-i)^k(P-i)^{-k}$ is bounded on $\H$. 
To prove this, we take $\psi\in \Dom P^k = \Dom (P+A)^k$ and we expand $(P+A-i)^k\psi = (P-i)^k\psi +X\psi$ 
where $X$ is a sum of terms of the form
$\prod_{p=1}^k X_p$ where each $X_p$ is either $A$ or $P-i$, and there is a least one $p$ such that $X_p =A$. 
Since $A$ sends continuously $H_P^m$ into itself for any $m\in \set{0,\cdots,n}$, we see that $X$ sends 
continuously $H_P^{k}$ into $H_P^1$. In particular, $X(P-i)^{-k}$ 
is bounded on $\H$. If $\psi\in \H$, $(P-i)^{-k}\psi \in \Dom P^k$, and thus $B\psi= \psi + X(P-i)^{-k}\psi$, which 
proves the claim. The bounded inverse theorem now entails that $B^{-1}=(P-i)^{k}(P+A-i)^{-k}$ is also bounded 
on $\H$.

Denoting $\norm{\cdot}_{A,k}$ (resp. $\norm{\cdot}_k$ the Hilbert norm associated to $H_{P+A}^k$ (resp. $H_P^k$), 
we see that for any $\psi\in \Dom P^k=\Dom (P+A)^k$, 
\begin{align*}
&\norm{\psi}^2_{A,k}\leq \max\set{1,\norm{B}^2} \norm{\psi}^2_k \,, \qquad\norm{\psi}^2_k \leq \max \set{1,
\norm{B^{-1}}^2}\norm{\psi}^2_{A,k}
\end{align*}
from which the result follows.

$(ii)$ Clearly $\big((P+A)^2+c\big)^{-k}$ sends continuously $\H$ into $H_{P+A}^{2k}$. Therefore, by $(i)$, it is 
continuous from $\H$ into $H_P^{2k}$. 
Moreover, $P^{2k}$ sends continuously $H_P^{2k}$ into $\H$. As a consequence, the composition 
$\big(R_P(c)\big)^{-k} \big(R_{P+A}(c)\big)^k $ is a bounded operator on $\H$. 

For the second statement, we use ${R}'(c)=\big(R(c)\big)^2$ after $k$ derivations of the resolvent identity 
$R_P(c)-R_{P+A}(c)=R_P(c)(AP+PA+A^2)R_{P+A}(c)$ with respect to the parameter $c$.

$(iii)$ follows directly from $(ii)$.
\end{proof}

We now recall some definitions and properties of $G$-pseudodifferential operators 
on $\R^d$ \cite{NR}: let $G^{p,q}(\R^d)$ (resp. $OPG^{p,q}$) be the $G$-class of symbols (resp. pseudodifferential 
operators) of order $(p,q)$, valued in $\M_{2^m}(\C)$, with $d=2m$; recall that $a\in G^{p,q}(\R^d)$ if and only if 
for any multi-index $(\a,\b)\in \N^{2d}$,
$$
\norm{\del_\xi^\a\del_x^\b a (x,\xi)}\lesssim 
\langle \xi\rangle^{p-|\a|}\langle x\rangle^{q-|\b|}
$$
where $\langle y \rangle \vc (1+\norm{y}^2)^{1/2}$.

We denote $\RR_\rho$, where $\rho>0$, the space of all bounded operators operators on 
$\H:= L^2(\R^d,\C^{N2^m})$ that 

- sends continuously $H^k:=H^k(\R^d,\C^{N2^m})$ into itself for all $k\in \set{0,\cdots,m}$. 

- extends as bounded operator from $L^{2,-s}$ into $L^{2,-s+\rho}$ for any $s>0$, where we define the set 
$L^{2,s}:=L^{2,s}(\R^d,\C^{N2^m})$ as the weighted Hilbert space of functions such that 
$\norm{\langle x\rangle^{s}f}_{\H}<\infty$.

We recall that $N$ denotes dimension of the gauge group representation.

The interest of the space $\RR_\rho$ lies in the following result.
\begin{lemma}
\label{lemGcalcul}
Let $\wt P$ be an elliptic symmetric operator in $OPG^{1,0}$ and $A\in \RR_\rho$ be selfadjoint with $\rho>d$. 
Denote $P$ the selfadjoint unbounded operator acting as $\wt P$ on $\H$ with domain $H^1$.\\
Then for any $c>0$,
$$
\big(R_P(c)\big)^m - \big(R_{P+A}(c)\big)^m \in \L^1(\H)\, .
$$
\end{lemma}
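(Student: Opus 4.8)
The plan is to reduce the claim, via Lemma~\ref{lemresolvent}, to a trace--class estimate for finitely many operators of the form $(R_{P}(c))^{j}\,(AP+PA+A^{2})\,(R_{P}(c))^{m+1-j}$, and then to prove each such estimate by a Schatten--Hölder factorisation inside the $G$-calculus. First I would note that, since $\widetilde P\in OPG^{1,0}$ is $G$-elliptic, elliptic regularity in the $G$-calculus (\cite{NR}) gives $\Dom P^{k}=H^{k}(\R^{d},\C^{N2^{m}})=:H^{k}$ for every $k$, with the graph norm of $(P-i)^{k}$ equivalent to the $H^{k}$-norm, so that $H_{P}^{k}=H^{k}$ with equivalent norms. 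Because $A\in\RR_{\rho}$ is bounded and self-adjoint on $\H$ and maps each $H^{k}$, $0\le k\le m$, continuously into itself, it maps each $H_{P}^{k}$, $0\le k\le m$, continuously into itself; hence Lemma~\ref{lemresolvent} applies with $n=m$, and by its part~(iii) it suffices to show that
\[
T_{j}\vc\big(R_{P}(c)\big)^{j}\,\big(AP+PA+A^{2}\big)\,\big(R_{P}(c)\big)^{m+1-j}\ \in\ \L^{1}(\H)\qquad(j=1,\dots,m).
\]

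For the estimate I would use two facts. Since $P^{2}+c$ is $G$-elliptic in $OPG^{2,0}$, one has $(R_{P}(c))^{k}\in OPG^{-2k,0}$, $P$ commutes with $(R_{P}(c))^{k}$ and $P(R_{P}(c))^{k}\in OPG^{1-2k,0}$ is bounded for $k\ge1$, while multiplication by $\langle x\rangle^{-t}$ lies in $OPG^{0,-t}$; I will invoke the Schatten estimate $OPG^{p,q}\subseteq\L^{r}(\H)$ valid for $1\le r<\infty$ and $p,q<-d/r$ (\cite{NR}), together with $\L^{r_{1}}\cdot\L^{r_{2}}\subseteq\L^{1}$ whenever $1/r_{1}+1/r_{2}=1$. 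Secondly, from $A\in\RR_{\rho}$: composing $\langle x\rangle^{\rho_{2}}\colon L^{2}\to L^{2,-\rho_{2}}$ (or, for $\rho_{2}=0$, the embedding $L^{2}\hookrightarrow L^{2,-s}$), then $A\colon L^{2,-s}\to L^{2,-s+\rho}$, then $\langle x\rangle^{\rho_{1}}\colon L^{2,-s+\rho}\to L^{2,\rho-s-\rho_{1}}\hookrightarrow L^{2}$, shows that $\langle x\rangle^{\rho_{1}}A\langle x\rangle^{\rho_{2}}$ (in particular $\langle x\rangle^{\rho_{1}}A$ and $A\langle x\rangle^{\rho_{2}}$) is bounded on $\H$ whenever $\rho_{1},\rho_{2}\ge0$ and $\rho_{1}+\rho_{2}<\rho$.

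Then, for a fixed $j$ and each of the three terms of $T_{j}$, I would commute the single factor of $P$ (present in the $PA$ and $AP$ terms) through the neighbouring powers of $R_{P}(c)$, bringing the term to the shape $P^{\varepsilon_{1}}(R_{P}(c))^{k_{1}}\cdot(\text{one or two }A\text{'s})\cdot(R_{P}(c))^{k_{2}}P^{\varepsilon_{2}}$ with $k_{1}+k_{2}=m+1$, $k_{1},k_{2}\ge1$, $\varepsilon_{1},\varepsilon_{2}\in\{0,1\}$, $\varepsilon_{1}+\varepsilon_{2}\le1$. Fixing $\rho_{1},\rho_{2}>0$ with $\rho_{1}+\rho_{2}<\rho$, insert $\langle x\rangle^{-\rho_{1}}\langle x\rangle^{\rho_{1}}$ just left of the first $A$ and $\langle x\rangle^{\rho_{2}}\langle x\rangle^{-\rho_{2}}$ just right of the last $A$; by the second fact the resulting middle block is bounded on $\H$, so the term equals $C_{1}\cdot(\text{bounded})\cdot C_{2}$ with
\[
C_{1}=P^{\varepsilon_{1}}(R_{P}(c))^{k_{1}}\langle x\rangle^{-\rho_{1}}\in OPG^{p_{1},-\rho_{1}},\qquad C_{2}=\langle x\rangle^{-\rho_{2}}(R_{P}(c))^{k_{2}}P^{\varepsilon_{2}}\in OPG^{p_{2},-\rho_{2}},
\]
where $p_{i}=\varepsilon_{i}-2k_{i}<0$, so that $|p_{1}|+|p_{2}|=2(m+1)-(\varepsilon_{1}+\varepsilon_{2})\ge 2m+1>2m=d$. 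Since $|p_{1}|/d+|p_{2}|/d>1$, one can pick $r_{1},r_{2}\in(1,\infty)$ with $1/r_{1}+1/r_{2}=1$ and $1/r_{i}<|p_{i}|/d$; since then $d/r_{1}+d/r_{2}=d<\rho$, one can fix $\rho_{1},\rho_{2}$ with $d/r_{i}<\rho_{i}$ and $\rho_{1}+\rho_{2}<\rho$. With these choices $p_{i}<-d/r_{i}$ and $-\rho_{i}<-d/r_{i}$, hence $C_{i}\in\L^{r_{i}}(\H)$ and the term lies in $\L^{r_{1}}\cdot\L^{r_{2}}\subseteq\L^{1}(\H)$. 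Summing over the three terms and over $j=1,\dots,m$, and invoking Lemma~\ref{lemresolvent}(iii), gives $(R_{P}(c))^{m}-(R_{P+A}(c))^{m}\in\L^{1}(\H)$.

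The hard part is this last step. Heuristically $T_{j}$ looks like a $G$-operator of orders $(-2m-1,-\rho)$, both below $-d$, and such operators are trace class; but $A\in\RR_{\rho}$ is genuinely \emph{not} a $G$-operator (its derivatives need not decay), so $T_{j}$ lies outside the calculus and $A$ must be isolated between the weights $\langle x\rangle^{\pm\rho_{i}}$ while the outer factors are kept pseudodifferential. Worse, the total smoothing order $-2m-1$ is odd, so splitting $T_{j}$ into a product of two Hilbert--Schmidt operators (which would require smoothing order $<-d/2$ on each factor) fails by exactly half an order; this is precisely why one has to pass to a genuine Schatten--Hölder pair $\L^{r_{1}}\cdot\L^{r_{2}}$, for which the conditions $|p_{i}|>d/r_{i}$ are open inequalities and the slack $2m+1>d$ together with the strict hypothesis $\rho>d$ furnish exactly the room needed.
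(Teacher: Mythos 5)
Your proof is correct and follows essentially the same route as the paper's: reduction via Lemma \ref{lemresolvent}(iii), sandwiching $A$ between weights $\langle x\rangle^{\pm\rho_i}$ using the $\RR_\rho$ mapping property $L^{2,-s}\to L^{2,-s+\rho}$, and then the Schatten criterion $OPG^{p,q}\subseteq\L^{r}$ combined with H\"older. The only differences are cosmetic bookkeeping: the paper fixes the weight split as $(2j,\rho-2j)$ with an explicit $\eps$, whereas you keep a general split $\rho_1+\rho_2<\rho$ and choose the H\"older exponents at the end.
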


\begin{proof}
Note that we can suppose without loss of generality that $\rho\leq d+1$. 

Since $\wt P\in OPG^{1,0}$ is elliptic, $H_P^k=H^k$ for any $k\in \N$. By Lemma \ref{lemresolvent} $(iii)$, 
since it is supposed that $A$ sends $H^k$ into itself for any $0\leq k\leq m$, it is sufficient to check that for 
any $j\in \set{1,\cdots,m}$, the operator $\big(R_P(c)\big)^{j} (AP+PA+A^2) \big(R_{P}(c)\big)^{m+1-j}$ is 
trace-class on $\H$. 

Fix $1\leq j\leq m$. We will now prove that $\big(R_P(c)\big)^{j} AP\big(R_{P}(c)\big)^{m+1-j} \in \L^1(\H)$. 
The proof that $\big(R_P(c)\big)^{j}PA\big(R_{P}(c)\big)^{m+1-j}$ and $\big(R_P(c)\big)^{j} A^2 (R_{P}(c))^{m+1-j}$ 
are in $\L^1(\H)$ is similar.   
 
Let $M_s$ be the pseudodifferential operator with Weyl symbol $\langle x\rangle^s$, for $s\in \R$. 
The operator $M_s$ is elliptic invertible in $OPG^{0,s}$, with inverse $M_{-s}$, and sends 
continuously $L^{2,s'}$ into $L^{2,s'-s}$, for any $s'\in \R$. 
By hypothesis $A$ can be extended so that $M_{2j} A M_{\rho-2j}$ is bounded on $\H$. In other words, 
there exists a bounded operator $B\in \B(\H)$ such that $A=M_{-2j} B M_{-\rho+2j}$. 

Note also that $\big(R_P(c)\big)^{j}$ extends as pseudodifferential operator in $OPG^{-2j,0}$ since $\wt P$ is 
elliptic in $OPG^{1,0}$.

Fix $0<\eps<\min\set{\half(\rho/d-1), 2j/d, (\rho-2j)/d}$. It is known \cite{BuzanoNicola, BuzanoToft} that 
if $T\in OPG^{p,q}$, $p<-d/r$, $q<-d/r$, then $T \in \L^r(\H)$. Thus, it appears that 
$\big(R_P(c)\big)^j M_{-2j} \in \L^{p_1}(\H)$ where $p_1^{-1}=2j/d-\eps$ and 
$M_{-\rho+2j} P\big(R_{P}(c)\big)^{m+1-j}\in \L^{p_2}(\H)$ where $p_2^{-1}=(\rho-2j)/d-\eps$. As a consequence, 
\begin{align*}
\big(R_P(c)\big)^{j} AP\big(R_{P}(c)\big)^{m+1-j} = \big(R_P(c)\big)^{j} M_{-2j} \, B\, 
M_{-\rho+2j} P\big(R_{P}(c)\big)^{m+1-j} \in \L^{(p_1^{-1}+p_2^{-1})^{-1}}(\H)
\end{align*}
and the result now follows from the fact that $p_1^{-1}+p_2^{-1} =\rho/d-2\eps  \geq 1$. 
\end{proof}

\begin{proof}[Proof of Theorem \ref{theoremA1}] The condition \eqref{hyp} implies that the multiplication operator 
$A$ belongs to $\RR_\rho$. Thus, Lemma \ref{lemGcalcul} applied to $\wt P=\DD$ yields directly the result.
\end{proof}

\begin{remark} When $A$ is a multiplication operator, the condition $A\in \RR_\rho$ implies some constraints on 
the derivatives of $A$. To remove these constraints, a technique based on a commutation between 
$R_{P+A}$ and $M_s$ was used in \cite{Yafaev05}. 
On the other hand, our class $\RR_\rho$ contains operators that are not necessarily multiplication by functions. 
For instance $M_{-\rho}P \in \RR_\rho$ when $P$ is a pseudodifferential operator on $\R^d$ with symbol in the 
H\"{o}rmander class $S_{0,0}^0$.
\end{remark}

\begin{remark} 
In the stronger case where $A$ is supposed to be a pseudodifferential operator in $OPG^{0,-\rho}$ where 
$\rho>d$, then the trace-class property $\big(R_P(c)\big)^m - \big(R_{P+A}(c)\big)^m \in \L^1(\H)$ follows 
more directly and there is a integral formula for the value of trace of the resolvent perturbation 
$\big(R_P(c)\big)^m - \big(R_{P+A}(c)\big)^m$.

Indeed, in this case, $P+A$ is elliptic in $OPG^{1,0}$. By ellipticity and spectral invariance, $R_A$ and $R$ 
are in $OPG^{-2,0}$. Since $\rho>0$, $AP+PA+A^2\in OPG^{1,-\rho}$ implies
that $R_P^m-R_{P+A}^m\in OPG^{-2n-1,-\rho}$. The result now follows from the following property 
(see for instance \cite[Theorem 4.4.21]{NR} for more details and greater generality):
if $T\in OPG^{p,q}$ with $p<-d$, $q<-d$, then $T$ is a trace-class operator on $L^2(\R^d, \C^{N2^m})$, and 
moreover $\Tr(T) = \int_{\R^{2d}} \tr_{\C^{N2^m}}(T_w)$ where $T_w$ is the Weyl symbol of $T$ which belongs
to $G^{-d,-d}(\R^d)\subseteq L^1(\R^{2d}, \C^{N2^m})$. 
\end{remark}

\section{Spectral action to the second order of field strength}\label{sec-spact2}

\subsection{Heat kernel to the second order of field strength}\label{sec-BV}

Despite the fact that in \cite{Barvinsky:1990up}, $M$ is only supposed to be asymptotically flat, we assume here 
that $M$ is flat. Equation \eqref{Dirac} yields
\begin{gather*}
\DD_A^2=-(\nabla^2+E),\\
\nabla^2 \vc g^{\mu \nu} \, \nabla_\mu \nabla_\nu\,,
\qquad E \vc  \tfrac{1}{4}\,[\ga^\mu,\ga^\nu]\,F_{\mu\nu} \,.
\end{gather*}
We first compute the trace of the heat kernel as a function of $F$. In \cite{Barvinsky:1990up}, Barvinsky 
and Vilkovisky have introduced a covariant perturbation theory, and the use of  \cite[eq. (2.1)]{Barvinsky:1990up} 
gives, for any $s>0$, small or large,
\begin{align}
\tilde K(s,\DD_A^2)(F)&=\tfrac{1}{(4\pi s )^m}  \int_M  \,d^{2m}x \,
\tr\, \Bigl[sE+ s^2\,E\,\tfrac{1}{2}h(-s\partial^2)\,E 
+ s^2\,F_{\mu\nu}\, q(-s\partial^2) \,F^{\mu\nu} \Bigr]   +\mathcal{O}(F^3)(s) \label{traceheat}
\end{align}
where the trace is both on the gauge and spinor indices and
\begin{align}
&q(z)\vc -\tfrac{1}{2} \tfrac{h(z)-1}{z} \,,\nonumber\\
&h(z)\vc\int_0^1 d\alpha \,e^{-\alpha (1-\alpha)\,z}\,. \label{h}
\end{align}
The function $h$ appears here since it corresponds to an expansion process: 
\begin{align*}
h(z)=\int_0^\infty\int_0^\infty d\a_1\,d\a_2\,\delta(1-\a_1-\a_2)\,e^{-\a_1\a_1z}.
\end{align*}
The main interest of formula \eqref{traceheat} is that it is valid for all positive values of $s$, and not only when 
$s\to 0$. 
Thus if
\begin{align}
& \kappa(z) \vc 2^{m-2}\big(-h(z) +4q(z)\big)\,, \label{kappadef}
\end{align}
we get, with 
$\tr\big( [\ga^\mu,\ga^\nu]\,[\ga^\rho,\ga^\sigma] \big)=2^{m+2}(g^{\mu\sigma}g^{\nu\rho}-g^{\mu\rho}g^{\nu\sigma}),$
\begin{align}
\label{Trace}
\Tr \big( \exp(-s\DD_A^2)-\exp(-s\DD^2)\big)(F)&=\tfrac{1}{(4\pi )^m}\,s^{-m+2}\,\int_M d^{2m}x \,
\tr \Bigl[ F_{\mu\nu} \,\kappa(-s\partial^2)\, F^{\mu\nu} \Bigr] +\mathcal{O}(F^3)(s)
\end{align}
where the trace is now over gauge indices only.
Note that 
\begin{align}
h(z)=2\sum_{n=0}^\infty \tfrac{z^n}{n!}\,\big( \tfrac{d\,}{dz}\big)^{2n}\,\int_0^{1/2} d\a \,e^{-\a z}\, 
\underset{z\to \infty}{\sim}\, 2\sum_{j=0} \tfrac{z^{-1-j} (2j)!}{j!}\, .\label{hzas1}
\end{align}
So we obtain the asymptotics of $\kappa(z)=2^m\, h'(z)$:
\begin{align}
\label{kappa}
\kappa(z)  \underset{z \to \infty}{\sim}\, -2^{m+1}\, z^{-2}\,.
\end{align}

\subsection{\texorpdfstring{Assumptions on the function $f$}{Assumptions on the function f}}
\label{sec-ncom}

Let $\CC_2$ be the class of functions $f$ such that
\begin{align}
\label{hypothesis}
&f: [0,\infty[ \to \R^+ \text{ is piecewise continuous and } \nonumber\\ 
&f(x)\, \underset{x\to\infty}{\sim}\,\mathcal{O} \big(f_\epsilon(x)\big) \text{ with } 
f_\epsilon(x) \vc (x+\epsilon)^{-m-\epsilon} \text{ for some } \epsilon >0
\end{align}
where we take care of the eventually (finite dimensional) kernel of $\DD^2$ via $x+\epsilon$.

Let $\CC_3$ be the class of functions $f$ such that 
\begin{align}
\label{LT}
&f \text{ is a function (not a distribution) such that exists a function $\varphi$ satisfying}\nonumber\\ 
&f=L\varphi \text{ and } \nonumber \\
&\qquad \text{either } \varphi(t)  \text{ is a function and }t^{-k}\,\varphi(t) \in \L^1([0,\infty[) 
\text{ for } k\in \set{1,\cdots,m} \nonumber\\
& \qquad \text{or } \varphi \text{ is a finite sum of } t^p\delta^{(n)}(t-a) \text{ for arbitrary } n,p\in\N,\,a\in ]0,\infty[.
\end{align}
The set $\CC_1\cap \CC_2$ contains for instance $(a+x)^n\,e^{-bx}$ for $a,\,b\geq 0,\,n\in \N$, or completely 
monotonic functions (see Definition \ref{defcm}) which fall off like in \eqref{hypothesis}. We can enlarge this 
set with all finite positive linear combinations of functions satisfying (\ref{LT}).

\subsection{Calculation of the spectral action}

We first assume that $f=L\varphi$ is a Laplace transform of a function $\varphi$, as in (\ref{Laplace}), so 
the spectral action is given by (\ref{SHK}).

The expansion (\ref{asS}) starts with $k=1$. However, the same universal function $f$ should describe the 
spectral action in the presence of metric fluctuations when the heat kernel coefficient $a_0$ does not vanish.
For this reason, $\varphi_0$ must be finite,
\begin{align}
\label{cf}
c_f \vc \varphi_0= \int_0^\infty dt\, t^{-m}\, \varphi(f)(t)<\infty.
\end{align}
With the definition
\begin{align}
\label{wLambda}
w_\Lambda(z) \vc \int_0^\infty  dt\,t^{-m+2}\,\kappa(t\Lambda^{-2}\,z)\,\varphi(t),
\end{align}
we obtain 
\begin{equation}
\int_M d^{2m}x\, {\rm tr}\bigl[ F^{\mu\nu} \,w_\Lambda (-\partial^2) \, F_{\mu\nu} \bigr] 
=\int  d^{2m}p\, {\rm tr} \bigl[ \hat F^{\mu\nu}(-p) \,w_\Lambda(p^2)\, \hat F_{\mu\nu}(p) \bigr]
\label{wp}
\end{equation}
where $\hat F$ is the Fourier transform of $F$.
At this point, it is important to quote that, by the analysis in \cite{Barvinsky:1990up}, at all order in the curvature, 
the large $s$ behavior of $\Tr \big( \exp(-s\DD_A^2)\big)(F)$ in \eqref{Trace} is in $s^{-m+2}$. In particular, in 
dimension 4, it stabilizes.

So the $F$-dependence of the spectral action (assuming commutation of integrals) is
\begin{align}
\label{Squadra}
S(\DD_A,\Lambda,f)(F) &=
 \tfrac{\Lambda^{d-4}}{(4\pi)^m}\int_M  d^{2m}p\, {\rm tr} \bigl[ \hat F^{\mu\nu}(-p) \,w_\Lambda(p^2)\, 
 \hat F_{\mu\nu}(p) \bigr] + \mathcal{O}(F^3)\,.
\end{align}

We emphasize now that it is not necessary to assume that the function $f$ is a Laplace transform of a function 
$\varphi$ which could be a distribution. This is a point where we can enjoy the beauty of the spectral action. 
Here, this assumption is an artifact due to the use of the heat kernel in formula \eqref{Trace}.

For all $k\in \set{1,\cdots,m}$, denote by $f^{[k]}(x) \vc -\int_x^\infty dy\,f^{[k-1]}(y)$ be a $k$-th primitive of $f$ (so 
we have $(f^{[k]})^{(k)}=f$).

\begin{lemma}
\label{intmult}
Assume $f\in \CC_2$. Then, for  $k\in \set{1,\cdots,m}$, $(-1)^k\,f^{[k]}$ is a bounded  
positive function in $C^k([0,\infty[)$ such that $(-1)^k\,f^{[k]}=\mathcal{O}\big((-1)^k\,f_\epsilon^{[k]}\big)$. Moreover
\begin{align*}
\int_0^\infty ds \int_{s}^\infty ds_1\int_{{s_1}} ds_2\cdots \int_{s_{k-2}}^\infty ds_{k-1}\,f(s_{k-1})
=(-1)^{k}\,f^{[k]}(0) \geq 0.
\end{align*}
\end{lemma}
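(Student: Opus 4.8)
The plan is to prove the three assertions of Lemma~\ref{intmult} in order, since the iterated-integral identity will then follow by reassembling the inductive step. First I would set up notation: write $g_\epsilon \vc f_\epsilon$ and recall from \eqref{hypothesis} that $f$ is piecewise continuous, nonnegative, and $f(x)=\mathcal{O}\big((x+\epsilon)^{-m-\epsilon}\big)$. Since $m+\epsilon>1$, the tail integral $\int_x^\infty f(y)\,dy$ converges for every $x\ge 0$, so $f^{[1]}(x)=-\int_x^\infty f(y)\,dy$ is well defined; it is nonpositive because $f\ge 0$, hence $(-1)^1 f^{[1]}\ge 0$, and it is bounded because $\big|f^{[1]}(x)\big|\le \int_0^\infty f(y)\,dy<\infty$. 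Monotonicity of $f^{[1]}$ (it is nondecreasing, as $(f^{[1]})'=f\ge0$ wherever $f$ is continuous) shows $(-1)f^{[1]}$ is nonincreasing, so its value at $0$ is the maximum; combined with the explicit bound $f(y)\le c\,(y+\epsilon)^{-m-\epsilon}$ one gets $(-1)f^{[1]}(x)=\mathcal{O}\big((-1)f_\epsilon^{[1]}(x)\big)$ by integrating the bound, since $f_\epsilon^{[1]}(x)=-\tfrac{1}{m+\epsilon-1}(x+\epsilon)^{-m-\epsilon+1}$.

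Next I would run the induction on $k$ from $1$ to $m$. Assume $(-1)^{k-1}f^{[k-1]}$ is bounded, positive (here I read ``positive'' as nonnegative), lies in $C^{k-1}([0,\infty[)$, and satisfies $(-1)^{k-1}f^{[k-1]}=\mathcal{O}\big((-1)^{k-1}f_\epsilon^{[k-1]}\big)$. Because $f_\epsilon^{[k-1]}(x)$ decays like $(x+\epsilon)^{-m-\epsilon+k-1}$ with exponent $m+\epsilon-k+1>1$ for $k\le m$, the function $f^{[k-1]}$ is integrable at $+\infty$, so $f^{[k]}(x)=-\int_x^\infty f^{[k-1]}(y)\,dy$ makes sense. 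The sign alternation: $(-1)^{k-1}f^{[k-1]}\ge 0$ forces $\int_x^\infty f^{[k-1]}$ to have sign $(-1)^{k-1}$, hence $f^{[k]}$ has sign $(-1)^{k}$, i.e.\ $(-1)^k f^{[k]}\ge 0$. Boundedness follows from $\big|f^{[k]}(x)\big|\le \int_0^\infty |f^{[k-1]}(y)|\,dy<\infty$ using the inductive decay estimate. Smoothness: $f^{[k]}$ is an antiderivative of $f^{[k-1]}\in C^{k-1}$, so $f^{[k]}\in C^{k}$. The growth comparison $(-1)^k f^{[k]}=\mathcal{O}\big((-1)^k f_\epsilon^{[k]}\big)$ comes from integrating the inductive $\mathcal{O}$-bound and computing $f_\epsilon^{[k]}$ explicitly as a multiple of $(x+\epsilon)^{-m-\epsilon+k}$.

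For the final displayed identity, I would simply observe that the nested integral $\int_0^\infty ds\int_s^\infty ds_1\cdots\int_{s_{k-2}}^\infty ds_{k-1}\, f(s_{k-1})$ equals, by definition and induction on the innermost integrations, $\int_0^\infty ds\, \big((-1)^{k-1}f^{[k-1]}\big)$-type expressions; more precisely each inner integral $\int_{s_{j-1}}^\infty ds_j\,(\cdots)$ replaces one primitive, so after peeling off all $k$ integrations one lands exactly on $-\int_0^\infty f^{[k-1]}(s)\,ds=f^{[k]}(0)$, which by the sign assertion is $(-1)^k$ times a nonnegative number, giving $(-1)^k f^{[k]}(0)\ge 0$. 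Fubini/Tonelli is legitimate throughout because $f\ge 0$ and all the iterated integrals converge by the decay estimates just established. I expect the main (mild) obstacle to be bookkeeping the signs and the exponents in the $\mathcal{O}$-comparison uniformly in $k$ — in particular checking that the exponent $m+\epsilon-k+1$ stays $>1$ for all $k\in\{1,\dots,m\}$ so that every tail integral converges, which is exactly why the statement is restricted to $k\le m$; everything else is routine calculus with nonnegative functions.
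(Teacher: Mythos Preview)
Your proposal follows essentially the same inductive strategy as the paper's proof: establish positivity, boundedness, $C^k$-regularity, and the $\mathcal{O}(f_\epsilon^{[k]})$ estimate for $k=1$, then propagate by induction using that the exponent $m+\epsilon-k+1>1$ for $k\le m$ keeps each tail integral convergent. The paper's proof is simply a compressed version of this.

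One small sign slip to fix in your last paragraph: after peeling off the $k-1$ inner integrals you do not get $f^{[k-1]}(s)$ but $(-1)^{k-1}f^{[k-1]}(s)$, since each application of $\int_{s_{j-1}}^\infty$ introduces a factor $-1$ via the definition $f^{[j]}(x)=-\int_x^\infty f^{[j-1]}$. Hence the full iterated integral is $\int_0^\infty(-1)^{k-1}f^{[k-1]}(s)\,ds=(-1)^{k-1}\bigl(-f^{[k]}(0)\bigr)=(-1)^k f^{[k]}(0)$, which matches the lemma's display; your version as written would give $f^{[k]}(0)$ instead. This is exactly the sign bookkeeping you anticipated.
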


\begin{proof}
Positivity and differentiability of $-f^{[1]}$ are clear. Let $x_0$ be such that $f(x) \leq c\,f_\epsilon(x)$, 
$\forall x>x_0$. Then, $0\leq -f^{[1]}(x)\leq c\int_x^\infty dy \,f_\epsilon^{[1]}(y)\leq-c\,f_\epsilon^{[1]}(x)$ and 
$-f^{[1]}=\mathcal{O}(-f_\epsilon^{[1]})$. An induction yields 
$0 \leq (-1)^k\,f^{[k]}=\mathcal{O}\big((-1)^k\,f_\epsilon^{[k]}\big)$. The integral identity follows from the fact that 
$f^{[k]}(x)\to 0$ when $x \to \infty$.
\end{proof}

Since we want to make a connection to \eqref{cf} and \eqref{wLambda}, we are obliged to suppose that $f$ is 
a Laplace transform but eventually of a distribution since, typically, we want to allow $f(x)$ to be $e^{-x}$. 

The next lemma shows $f\in \CC_2$ should be sufficient to get Theorem \ref{Result} at the price to 
redo all the covariant perturbation theory results of \cite{Barvinsky:1990up}, something that we do not do here. So we 
will assume that $f$ is a Laplace transform of a certain class of distributions defined on purpose.

It is worthwhile to quote at this point that there is a nice class of functions satisfying \eqref{hypothesis} (if one adjusts 
the tail of the function), namely the class of completely monotonic functions:

\begin{definition}
\label{defcm}
A function $f: \,]0,\infty[\to\mathbb{R}$ is completely monotonic (c. m.) if $f^{(n)}(x)$ exists for all $n\in \N$ and 
$(-1)^n\,f^{(n)}(x)\geq 0$ for all $x>0$.
\end{definition}
The limit $f^{(n)}(0)=\lim_{x\to 0^+} f^{(n)}(x)$ exists, finite or infinite. 
By Bernstein's theorem, a necessary and sufficient condition that $f$ should be c. m. in $]0,\infty[$ is that 
$f(x)=\int_0^\infty e^{-xt} \, dg(t)$ where $g(t)$ is a bounded non-decreasing function and that the integral 
converges for $x\in ]0,\infty[$ (see \cite[p. 161]{Widder}).
If $0\neq f$ is c. m., then $f(x)$ cannot vanish, and when $f^{(n)}(x_0)=0$ for $x_0>0$, then $f^{(m)}(x_0)=0$ 
for all $m>n$.

Examples of c. m. functions: 
$e^{-\a x},\,(\a+\beta x)^\ga, \, \log(a+\tfrac{b}{x})$ for $\a,\beta, \ga \geq 0$, $a\geq 1,b>0$.
This class is quite stable: if $f$ and $g$ are c. m. then, $f^{(2n)}(x)$, $-f^{(2n+1)(x)}$, $af(x)+bg(x)$ (for $a,b \geq 0$), 
$f(x)g(x)$, $e^{f(x)}$ are c. m. In particular $e^{ax^b},\,(1+x)^{a/x}, \,(\a+\tfrac{\beta}{x})^a$ are c. m. for 
$a\geq0,\,b\leq 0,\,\a \geq 1,\,\beta>0$.

Any Laplace transform $L(\varphi)(x) =\int_0^\infty dt\,e^{-xt}\,\varphi(t)$ of a non-negative function $\varphi$ is 
c. m. (if integral converges!). Any c. m. function $f$ is positive and Laplace transform of a positive function 
$\varphi$. Note that $f_\epsilon$ is c. m. and in $\kappa$ defined in \eqref{kappadef}, the key function $h$ 
of \eqref{h} is also c. m. and so is $-\kappa(x)=-2^m\, h'(x)$.
\medskip

To allow commutation of integrals in the proof of next lemma, we will use the hypothesis defined in Section
\ref{sec-ncom}.

\begin{lemma}
\label{lemmawL}
Assume that $f$ satisfies \eqref{hypothesis} and \eqref{LT}. Then
\begin{align}
&c_f=(-1)^{m}\,f^{[m]}(0),\label{cf1}\\
&w_\Lambda(p^2)=(-1)^{m-1}\,2^{m-2}\int_0^1d\a\,\Big[f^{[m-2]} \big(\a(1-\a)\Lambda^{-2}p^2\big)
-\tfrac{2\Lambda^{2}}{p^2} \int^{\a(1-\a)\Lambda^{-2}p^2}_0 ds_1\, f^{[m-2]}(s_1) \Big]. \label{w}
\end{align}
\end{lemma}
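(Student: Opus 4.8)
The plan is to evaluate the two defining integrals \eqref{cf} and \eqref{wLambda} by repeatedly integrating by parts in the Laplace variable $t$, trading powers of $t$ for primitives of $f$. For \eqref{cf1}, write $c_f=\int_0^\infty dt\,t^{-m}\varphi(t)$ and recall that $f=L\varphi$, so $f^{[k]}(x)=(-1)^k\int_0^\infty dt\,t^{-k}e^{-xt}\varphi(t)$ for $k\in\{1,\dots,m\}$ (this is the identity obtained by integrating the Laplace transform $k$ times, valid because $t^{-k}\varphi(t)\in\L^1$ by \eqref{LT}). Evaluating at $x=0$ gives immediately $f^{[m]}(0)=(-1)^m\int_0^\infty dt\,t^{-m}\varphi(t)=(-1)^m c_f$, which is \eqref{cf1}. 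One has to check that the boundary terms at $t=0$ and $t=\infty$ vanish; this is where hypothesis \eqref{LT} (either $t^{-k}\varphi(t)$ integrable, or $\varphi$ a finite sum of derivatives of delta functions supported away from $0$) is used to justify the manipulations and the commutation of integrals.

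For \eqref{w}, I would start from the definition $w_\Lambda(z)=\int_0^\infty dt\,t^{-m+2}\kappa(t\Lambda^{-2}z)\varphi(t)$ with $\kappa(z)=2^{m-2}(-h(z)+4q(z))=2^{m-2}\big(-h(z)-2\tfrac{h(z)-1}{z}\big)$ and $h(z)=\int_0^1 d\alpha\,e^{-\alpha(1-\alpha)z}$. Substituting the $\alpha$-integral representation of $h$ and interchanging the $\alpha$- and $t$-integrals (again justified by \eqref{LT}), the term $\int_0^\infty dt\,t^{-m+2}e^{-\alpha(1-\alpha)t\Lambda^{-2}z}\varphi(t)$ is exactly $(-1)^{m-2}f^{[m-2]}(\alpha(1-\alpha)\Lambda^{-2}z)$ by the same Laplace-primitive identity used above (here with $k=m-2$). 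The term coming from $-2(h(z)-1)/z$ requires a bit more care: the factor $1/z$ together with the $1/\Lambda^{-2}$ from the argument produces the prefactor $-2\Lambda^2/p^2$, and the combination $h(z)-1=\int_0^1 d\alpha\,(e^{-\alpha(1-\alpha)z}-1)$ turns, after the $t$-integration, into $\int_0^1 d\alpha\,\big(f^{[m-2]}(\alpha(1-\alpha)\Lambda^{-2}z)-f^{[m-2]}(0)\big)$, which one rewrites as $\int_0^1 d\alpha\int_0^{\alpha(1-\alpha)\Lambda^{-2}z}ds_1\,(f^{[m-2]})'(s_1)\cdot\text{(appropriate bookkeeping)}$ — more precisely, one needs the identity $\tfrac{h(z)-1}{z}=-\int_0^1 d\alpha\,\alpha(1-\alpha)\int_0^1 d\beta\,e^{-\beta\alpha(1-\alpha)z}$ or equivalently manipulate $\tfrac1z\int_0^{\cdot}$ so that the reported form $\tfrac{2\Lambda^2}{p^2}\int_0^{\alpha(1-\alpha)\Lambda^{-2}p^2}ds_1\,f^{[m-2]}(s_1)$ emerges. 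Collecting the two pieces with the overall $2^{m-2}$ and the sign $(-1)^{m-2}=(-1)^m=-(-1)^{m-1}$ yields \eqref{w}; I would set $z=p^2$ at the end.

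The main obstacle I anticipate is the rigorous justification of all the interchanges of integration (Fubini across the $t$, $\alpha$, and the inner $s_1$ variables) and the vanishing of boundary terms when integrating the Laplace transform $m-2$ times — particularly in the distributional case where $\varphi$ is a sum of $t^p\delta^{(n)}(t-a)$, where "integrating $k$ times" must be interpreted as the corresponding finite sum and one must check the resulting $f^{[k]}$ still makes sense and is the genuine $k$-th primitive. The algebraic rearrangement of the $-2(h(z)-1)/z$ term into the exact nested-integral form displayed in \eqref{w} is the other fiddly point: it is a matter of choosing the right integral representation of $(h(z)-1)/z$ so that, after the $t$-integration produces $f^{[m-2]}$, the remaining $\alpha$- and $s_1$-integrals land in precisely the stated shape. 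Everything else is the routine Laplace-transform computation sketched above, using Lemma \ref{intmult} to guarantee the primitives $f^{[k]}$ are well-defined bounded functions so the formulas are meaningful.
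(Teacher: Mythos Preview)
Your overall strategy is exactly the paper's: use the Laplace identity
\[
L\big(t^{-k}\varphi(t)\big)(s)=\int_s^\infty ds_1\cdots\int_{s_{k-1}}^\infty ds_k\,f(s_k)=(-1)^k f^{[k]}(s)
\]
to convert the $t$-integrals in \eqref{cf} and \eqref{wLambda} into iterated primitives of $f$, after inserting the $\alpha$-representation of $h$ and commuting integrals via \eqref{LT}. Your treatment of \eqref{cf1} and of the $-h$ contribution to $w_\Lambda$ is correct and matches the paper line for line.

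There is, however, a bookkeeping slip in your handling of the $-2(h-1)/z$ piece. The argument of $\kappa$ is $t\Lambda^{-2}p^2$, so the denominator is $t\Lambda^{-2}p^2$, not $\Lambda^{-2}p^2$. After extracting $\Lambda^2/p^2$ there is still a factor $1/t$ left over, so the relevant $t$-integral is $\int_0^\infty dt\,t^{-(m-1)}(\cdots)\varphi(t)$, which produces $f^{[m-1]}$, not $f^{[m-2]}$ as you wrote. The paper's resolution is then a single application of the fundamental theorem of calculus:
\[
(-1)^{m-1}\bigl[f^{[m-1]}\!\big(\a(1-\a)\Lambda^{-2}p^2\big)-f^{[m-1]}(0)\bigr]
=(-1)^{m-1}\int_0^{\a(1-\a)\Lambda^{-2}p^2}\! ds_1\,f^{[m-2]}(s_1),
\]
since $(f^{[m-1]})'=f^{[m-2]}$. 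This is what makes the $\int_0^{\cdot} f^{[m-2]}$ appear. Your own rewrite with $(f^{[m-2]})'$ would instead give $f^{[m-3]}$ and does not close. The alternative identity you mention, $\tfrac{h(z)-1}{z}=-\int_0^1 d\alpha\,\alpha(1-\alpha)\int_0^1 d\beta\,e^{-\beta\alpha(1-\alpha)z}$, is correct and also leads to the stated form after the substitution $s_1=\beta\alpha(1-\alpha)\Lambda^{-2}p^2$; but note that in that route the extra $\beta$-integral is precisely what compensates for the $1/t$ you dropped, so either way the missing $1/t$ has to be accounted for.
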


\begin{proof}
The first equation follows from 
$c_f=L\big(t^{-m}\varphi(t)\big)(0)$ and a property of Laplace transform: for 
$k\in \N^*$, 
$L\big(t^{-k}\varphi(t)\big)(s)=\int_s^\infty ds_1 \cdots \int_{s_{k-1}}^\infty ds_{k}\,f(s_{k})$, so that we can use 
Lemma \ref{intmult}. 
We apply again this formula to $ \kappa(z) =2^{m-2}\big(-h(z) -2\,\tfrac{h(z)-1}{z}\big)$ in the function 
$w_\Lambda(\Lambda^2\,x)=\int_0^\infty dt\, \kappa(t\,x)\,\varphi(t)\,t^{-(m-2)}$: for instance,
\begin{align}
\int_0^\infty dt \, h(t\,x)\,\varphi(t)t^{-(m-2)} & = \int_0^\infty dt\,\int_0^1d\a\, e^{-\a(1-\a)tx}\,
\varphi(t)\,t^{-(m-2)} \nonumber\\
&= \int_0^1d\a\, \int_0^\infty dt \, e^{-\big(\a(1-\a)x\big)t}\,\varphi(t)\,
t^{-(m-2)} \nonumber\\
& = \int_0^1 d\a \, L\big(\varphi(t)\, t^{-(m-2)})(\a(1-\a)x) \nonumber \\
& = \int_0^1d\a \, \int_{\a(1-\a)x}^\infty ds_1\int_{s_1}^\infty ds_2 \cdots 
\int_{s_{m-3}}^\infty ds_{m-2}\, f(s_{m-2}) \nonumber\\
& = \int_0^1d\a \,(-1)^{m-2} f^{[m-2]}\big(\a(1-\a)x\big). \label{inth}
\end{align}
The commutation of integrals in the second equality follows from \eqref{LT}.
To complete the proof, we proceed similarly for
\begin{align*}
\int_0^\infty dt \,\tfrac{h(t\,x)-1}{t\,x}\,\tfrac{\varphi(f)(t)}{t^{m-2}}  &= \tfrac{1}{x}\int_0^\infty dt\, h(t\,x)\,
\tfrac{\varphi(f)(t)}{t^{m-1}}-\tfrac{1}{x}\int_0^\infty dt\,\tfrac{\varphi(f)(t)}{t^{m-1}} \\
& = \tfrac{1}{x}\int_0^1d\a \,\big[ L\big(\varphi(t)t^{1-m-1}\big) \big(\a(1-\a)x\big)
-L\big(\varphi(t)t^{1-m}\big)(0) \big] \\
& =  -\tfrac{1}{x} \int_0^1 d\a \, \int_0^{\a(1-\a)x} ds_1 \int_{s_1}^\infty ds_2 \cdots 
\int_{s_{m-2}}^\infty ds_{m-1}\,f(s_{m-1}) \\
& =  \tfrac{(-1)^{m-1}}{x} \int_0^1 d\a \, \int_0^{\a(1-\a)x} ds_1\,f^{[m-2]}(s_1).
\end{align*}
\end{proof}
For instance, in dimension $d=4$, formulae \eqref{cf1} and \eqref{w} read
\begin{align}
&c_f=\int_0^\infty ds\,\int_{s}^\infty ds_1\, f(s_1),\nonumber \\
&w_\Lambda(p^2)=-\int_0^1d\a\,\Big[f \big(\a(1-\a)\Lambda^{-2}p^2\big)
-\tfrac{2\Lambda^{2}}{p^{2}} \int^{\a(1-\a)\Lambda^{-2}p^2}_0 ds_1\, f(s_1) \Big]. 
\label{w4d}
\end{align}
Finally, we get the following

\begin{theorem}
\label{Result}
Assume that $f\in \CC_1 \cap \CC_2 \cap \CC_3$. Then, up to the 
second order in $F$, 
the variation of the spectral action of above spectral triple is given by \eqref{Squadra} where $c_f$ and $w$ are 
defined by \eqref{cf1} and \eqref{w}. Moreover the asymptotics of $w$ is controlled by
\begin{align}
\label{wlarge}
w_\Lambda(p^2) \, \underset{p^2\to \infty}{\sim} \, -2^{m+1}\,c_f \,\Lambda^4\, p^{-4}+ o(p^{-4}).
\end{align}
\end{theorem}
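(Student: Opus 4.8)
The plan is to treat the two halves of the statement separately, leaning on the material already developed.

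\emph{The closed formula.} This part is essentially an assembly. Since $f\in\CC_1$, Theorem \ref{theoremA1} guarantees that the trace in \eqref{spectral action} exists and that the Lifshits trace formula applies, so $S(\DD_A,\Lambda,f)$ is well-defined. Since $f\in\CC_3$ we may write $f=L\varphi$ and run the Barvinsky--Vilkovisky computation of Section \ref{sec-BV}: inserting the heat-trace expansion \eqref{Trace} into \eqref{SHK}, carrying out the $t$-integration and the Fourier transform yields \eqref{Squadra}, with $w_\Lambda$ as in \eqref{wLambda} and with the coefficient $c_f=\varphi_0$ of \eqref{cf}. The interchanges of the $t$-, $\alpha$- and $p$-integrals needed here are precisely what the integrability/distributional conditions packaged in \eqref{LT} were designed to license; in the distributional case the nested-integral formula for $L\big(t^{-k}\varphi\big)$ used in the proof of Lemma \ref{lemmawL} is read off termwise on each $t^p\delta^{(n)}(t-a)$. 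Finally, $f\in\CC_2$ puts us in the hypotheses of Lemma \ref{intmult}, and Lemma \ref{lemmawL} then rewrites $c_f$ and $w_\Lambda$ in the closed forms \eqref{cf1} and \eqref{w}. This settles the first half.

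\emph{The asymptotics.} I would argue from the integral representation \eqref{wLambda} rather than from the closed form \eqref{w}: in \eqref{w} the two leading contributions, each of order $p^{-2}$, cancel, and the surviving $p^{-4}$ coefficient can only be reached through a somewhat delicate Laplace-type expansion of the $\alpha$-integral near the endpoints $\alpha=0,1$, whereas \eqref{wLambda} delivers it cleanly. Put $z=p^2$, so that $z^2 w_\Lambda(z)=\int_0^\infty dt\, t^{-m+2}\, z^2\kappa(t\Lambda^{-2}z)\,\varphi(t)$. Writing $w:=t\Lambda^{-2}z$ one has $z^2\kappa(t\Lambda^{-2}z)=\Lambda^4 t^{-2}\big(w^2\kappa(w)\big)$, and the elementary input is: $w^2\kappa(w)\to-2^{m+1}$ as $w\to\infty$ by \eqref{kappa}; $\kappa=2^m h'$ is bounded on $[0,\infty[$ and $w^2\kappa(w)\to0$ as $w\to0^+$; hence $w^2\kappa(w)$ is bounded on all of $[0,\infty[$, say $|w^2\kappa(w)|\le C$. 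Consequently, for each fixed $t>0$ the integrand tends to $-2^{m+1}\Lambda^4 t^{-m}\varphi(t)$ as $z\to\infty$, while being dominated, uniformly in $z$, by $C\Lambda^4 t^{-m}|\varphi(t)|$, which lies in $\L^1([0,\infty[)$ by the $k=m$ instance of the integrability condition in \eqref{LT} (and trivially in the distributional case). Dominated convergence then gives $z^2 w_\Lambda(z)\to -2^{m+1}\Lambda^4\int_0^\infty t^{-m}\varphi(t)\,dt=-2^{m+1}\Lambda^4 c_f$ by \eqref{cf}, i.e.\ exactly \eqref{wlarge}.

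The genuinely delicate point is therefore not the limit itself but making the uniformity and the interchanges of integrals rigorous across the whole class $\CC_1\cap\CC_2\cap\CC_3$: one must check that the integrability built into \eqref{LT} together with the finiteness \eqref{cf} really furnish the $\L^1$-majorant above in every even dimension $d=2m$, and that the distributional $\varphi$ permitted by \eqref{LT} — for which the majorant degenerates into a finite combination of point evaluations of derivatives of $t\mapsto t^{-m+2}\kappa(t\Lambda^{-2}z)$ — can be handled in the same way. Once that bookkeeping is in place, both halves follow from Theorem \ref{theoremA1}, Lemmas \ref{intmult} and \ref{lemmawL}, and the asymptotics \eqref{kappa}.
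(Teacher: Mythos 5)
Your proposal is correct, and while the first half (assembling Theorem \ref{theoremA1}, the Barvinsky--Vilkovisky expansion and Lemma \ref{lemmawL}) coincides with the paper's, your treatment of the asymptotics \eqref{wlarge} is genuinely different from the paper's. The paper uses $\kappa=2^m h'$ to rewrite $w_\Lambda(z)=2^m\Lambda^2\tfrac{d}{dz}\int_0^1 d\a\,(-1)^{m-1}f^{[m-1]}\big(\a(1-\a)z\Lambda^{-2}\big)$ (Eq.\ \eqref{wddz}), shows that the $\a$-integral behaves like $2\Lambda^2 c_f/z$, and then \emph{differentiates this asymptotic relation} to land on $-2^{m+1}c_f\Lambda^4 z^{-2}$ --- a step that is not automatic for a bare $\sim$ statement and is left unjustified there. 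Your route, applying dominated convergence to $z^2 w_\Lambda(z)$ in the representation \eqref{wLambda} with the uniform bound on $w^2\kappa(w)$ coming from \eqref{kappa} and the boundedness of $\kappa=2^m h'$, sidesteps that issue entirely and isolates exactly which hypothesis is used ($t^{-m}\varphi\in\L^1$, i.e.\ the $k=m$ instance of \eqref{LT} together with \eqref{cf}); your observation that the closed form \eqref{w} is ill-suited for this because two $O(p^{-2})$ contributions cancel is also accurate. The one place where your argument is thinner than you suggest is the distributional branch of \eqref{LT}: there dominated convergence is not the right tool, and "handled in the same way" really means differentiating the asymptotic expansion of $\kappa$ in $t$ before evaluating at $t=a$, which is legitimate only because \eqref{hzas1} provides a full, term-by-term differentiable expansion of $h$; a short verification (e.g.\ for $\varphi=\delta'(t-a)$, where the two resulting terms recombine into $-2^{m+1}c_f\Lambda^4 z^{-2}$ with $c_f=ma^{-m-1}$) would close this. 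Since the paper is no more explicit on this point, I regard it as bookkeeping rather than a gap; overall your argument buys a cleaner and slightly more rigorous proof of the universal $1/p^4$ decay at the cost of treating the function and distribution cases of $\CC_3$ separately.
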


\begin{proof}
It remains to prove the asymptotics. 
Since $\kappa(x)=2^m\,\tfrac{d\,}{dx}h(x)$,
$\kappa(t\Lambda^{-2}z)=\tfrac{2^m \Lambda^2}{t}\tfrac{d\,}{dz} h(t\Lambda^{-2}z)$,
with $z\vc p^2$,
\begin{align}
w_\Lambda(z)=&2^m\Lambda^{2} \tfrac{d\,}{dz}\int_0^\infty dt\,t^{-m+1}h(t\Lambda^{-2}z)\, 
\varphi(t)\nonumber\\
=&2^m\Lambda^{2} \tfrac{d\,}{dz} \int_0^1d\a\,(-1)^{m-1}\,f^{[m-1]}\big(\a(1-\a)z\Lambda^{-2}\big) 
\label{wddz}
\end{align}
by \eqref{inth}.
Moreover
\begin{align*}
\int_0^1d\a\,(-1)^{m-1}\,f^{[m-1]}\big(\a(1-\a)z\Lambda^{-2}\big)& \, \underset{x\to \infty}{\sim} \, 
(-1)^{m-1}2\int_0^{1/2}d\a\,  f^{[m-1]}(\a z\Lambda^{-2}\big) \\
& \, \underset{z\to \infty}{\sim} \, \tfrac{(-1)^{m-1}2\Lambda^2}{z}\int_0^\infty dy\,f^{[m-1]}(y)
\\ &\,\underset{\,\,\,\,\,\,\,\,\,\,\,\,\,\,\,\,\,\,\,}{=\,\,\,} \tfrac{-2(-1)^{m-1}\Lambda^2}{z}\,f^{[m]}(0)=\tfrac{2\Lambda^2\,\,c_f}{z}\,.
\end{align*}
Thus
\begin{align*}
w_\Lambda(z)  \, \underset{z\to \infty}{\sim} \, 
2^{m+1}\Lambda^4\,c_f \tfrac{4\,}{dz}\Big( \tfrac{1}{z}\Big).
\end{align*}
\end{proof}

\begin{remark}\label{dual}
The coefficient $\Lambda^4c_f$ is the same one which appears in front of
the $a_0$ term in the weak-field asymptotics of the spectral action. Although in our case, this coefficient is canceled 
due to the subtraction of the free heat kernel, cf.\ Eq.\ (\ref{asS}), it contributes to the cosmological constant term 
for more general spectral triples allowing for metric perturbations \cite{ConnesMarcolli}. This is a very remarkable
fact which shows an intimate relation between leading terms in the low and high momenta 
asymptotics of the spectral action. 
A positive value for $c_f$ is required to reproduce the Standard Model from noncommutative geometry 
\cite{CCM,ConnesMarcolli} (see also discussion in Sec.\ \ref{sec-pos} below).
\end{remark}

\begin{remark}
While the spectral action is non-local (a price to pay for its extremely simple definition), it becomes local after some 
asymptotic expansion. But the covariant perturbation method used from \cite{Barvinsky:1990up} is a way to control 
non-localities.
\end{remark}

\begin{remark}
To get the theorem, a spin structure is not necessary on $M$: any second order operator of the Laplace type with 
values on a vector bundle $V$ over $M$ (see \cite[eq. (2.1)]{Vassilevich:2003xt}) can replace $\DD^2$.
\end{remark}

\begin{remark}
By Theorem \ref{Result},the function $w_\Lambda(p^2)$ decays  at large $p$ at least as $1/p^4$. 
This implies that the propagator for the Yang-Mills field grows at large momenta, and the Yang-Mills spectral 
action is not super-renormalizable in contrast to the restricted expansion of the spectral action studied in 
\cite{Suijlekom2}.
\end{remark}

\subsection{An example}

Assume that $d=4$, and as a function $f$, let us take the function $f_r(z) \vc (z+a)^{-r}$ having a 
power-low decay at infinity where $0\neq a\in \R^+$ and $r>2$, so $f\in\CC_1$. 
To estimate the asymptotic behavior of spectral action it is most convenient to use \eqref{wddz}. Since the first 
primitive of $f$ is $f^{[1]}_r=(1-r)^{-1} (z+a)^{-r+1}$, we have
\begin{align*}
w_{\Lambda,r}(z)&=4\Lambda^2 \tfrac{d\ }{dz} 
\int_0^1 d\alpha\, \tfrac{-1}{1-r} [\alpha(1-\alpha)z\Lambda^{-2} +a]^{-r+1}
=\tfrac 8{r-1} \tfrac{d\ }{dy} \int_0^{1/2} d\alpha\, [\alpha(1-\alpha)y +a]^{-r+1}\\
&=\tfrac 8{r-1} \tfrac{d\ }{dy} y^{-r+1}\int_0^{1/2} d\alpha\, \Bigl[\alpha(1-\alpha) 
+\tfrac ay \Bigr]^{-r+1}
\end{align*}
with $y=z\Lambda^{-2}=p^2\Lambda^{-2}$. This expression is valid for arbitrary $p^2$. 

To evaluate the asymptotic behavior at $p^2\to \infty$ (or, at $y\to\infty$), one represents the integrand
$(a/y)+\alpha (1-\alpha)=(a/y)+\alpha-\alpha^2$ and expands in $\alpha^2$: 
\begin{equation*}
w_{\Lambda,r}(p^2)= \tfrac 8{r-1} \tfrac{d\ }{dy} y^{-r+1}\int_0^{1/2} d\alpha\,
\Bigl[ (\alpha + (a/y))^{-r+1} -(1-r)\alpha^2(\alpha + (a/y))^{-r}+\dots \Bigr]
\end{equation*}
yielding for the leading asymptotics
\begin{equation*}
w_{\Lambda,r}(p^2)=-\tfrac {8}{(4\pi)^2  a^{r-2}(r-1)(r-2)}\, 
\tfrac{\Lambda^{4}}{p^4}+o(p^{-4})
\end{equation*}
as claimed in 
\eqref{wlarge} since $c_f=\tfrac{1}{a^{r-2}(r-1)(r-2)}$. 

It is interesting to estimate the next-to-leading term in the expansion, which behaves as
$$\begin{array}{lll}
&p^{-6} \,&\text{ for}\quad r>3,\\
&p^{-6} \ln {p}  \,&\text{ for}\quad r=3,\\
&p^{-2r} \,&\text{ for}\quad 2<r<3.
\end{array}
$$
This example show that the next to leading term in the asymptotics is not always $p^{-6}$ as one might think.

\section{\texorpdfstring{Relaxing assumptions on the function $f$}{Relaxing assumptions on the function f}}
\label{sec-ass}

In the previous section, we used different sets of assumptions on the function $f$ which defines the spectral 
action. The first set $\CC_1$ refers to the role of $f$ as a regulator of the trace while $\CC_2$ allows to take 
primitives of $f$. It is natural to assume that $f$ is positive, and it is necessary to assume that it falls off 
sufficiently fast. The third set $\CC_3$ is of a technical nature: we need $f$ being a Laplace transform of a 
function $\varphi$ to use the heat kernel methods. The form of $\varphi$ was restricted
to enable us to commute integrals in the proof of Theorem \ref{intmult}.

One can ask whether there is a natural family of functions satisfying all assumptions which we have made. 
Such a family consists of the so-called completely monotonic functions and is described in
the Definition \ref{defcm}. These functions seem to be the best candidates for the definition of spectral action.

Another question, which shall be address below, is whether some of the assumptions can be weakened.

\subsection{Lifting the positivity assumption}\label{sec-pos}

For a positive function $f$, the $k$-th primitive $f^{[k]}$ is positive for $k$ even and negative for $k$ odd. 
Then, according to (\ref{cf1}) the coefficient $c_f$ is strictly positive. The leading $1/p^4$ term in the 
expansion (\ref{wlarge}) is then non-zero. However, lifting the positivity assumption on $f$, one gets a possibility
to engineer a variation of spectral action such that $w_\Lambda(p^2)$ falls off at large $p$ as $p^{-2k}$ for 
any given integer $k>2$.  Indeed, consider $f$ being a finite sum
\begin{align*}
f(z)=\sum_n f_n(z), \qquad f_n(z)=c_n\,\exp (-z/b_n) 
\end{align*}
with $b_n>0$ and where some of $c_n$ are allowed to take negative values. Recall that 
$c_n^{-1}f_n\in\CC_1\cap\CC_2\cap\CC_3$.

The variation of spectral action is then given by a sum of the heat kernels
\begin{align}
\label{SAfn}
\SS(\DD_A,\Lambda,f)= \sum_n c_n\, \tilde K(b_n^{-1}\Lambda^{-2},\DD_A^2)\,,
\end{align}
and the function $w_\Lambda(p^2)$ can be calculated directly without using
the Laplace transform:
\begin{align*}
w_\Lambda (p^2) = 4\Lambda^2 \tfrac{d\ }{d(p^2)} \sum_n c_n\,b_n\, h(p^2\Lambda^{-2}b_n^{-1})\,. 
\end{align*}
Here, for simplicity, we restrict ourselves to four dimensions, so $m=2$.

Using \eqref{hzas1}, the large $p$-asymptotic expansion for $w_\Lambda(p^2)$ reads
\begin{equation}
\label{wLbc}
w_\Lambda(p^2) \,\underset{p\to \infty}{\sim}\, -8 \sum_{j=0}^\infty (\Lambda/p)^{2(j+2)}\,
\tfrac{(j+1)(2j)!}{j!} \sum_{n} c_n \,b_n^{2+j} \,.
\end{equation}

By comparing (\ref{SAfn}) with (\ref{asS}), we see that in the weak field asymptotics of the action, 
the heat kernel coefficients $a_{2k}$ are multiplied by a sum $\sum c_n \,b^{2-k}$ having the same 
structure as the sums appearing in the large $p$-asymptotics (\ref{wLbc}) but with different 
powers of $b_n$. The only exception is $\sum c_n \,b_n^2$ which multiplies the leading $p^{-4}$ 
term in (\ref{wLbc}) and the coefficient $a_0$. (We remind, that $a_0$ reappears in the asymptotics of the
spectral action if we allow for metric perturbations.) Herewith we reconfirm the
statement made in Remark \ref{dual}, but also observe that no further relations
exist between the weak-field and large momentum asymptotics of the spectral action. 

If one is allowed to choose freely the coefficients $b_n$ and $c_n$ and to take negative $c_n$, it is possible 
to cancel any finite number of terms in (\ref{wLbc}).

\subsection{Step-function regularization}
\label{sec-step}

The Laplace transformation is smoothing: if $f$ is a Laplace transform of $\varphi$, then $f$ has to be at 
least continuous. However, the assumptions (\ref{hypothesis}) remain valid also for piecewise 
continuous functions, and the right hand sides of the non-asymptotic (\ref{w}) and asymptotic (\ref{wlarge})
formulae exist for such functions. It is therefore interesting to make a calculation for a piecewise 
continuous function $f$ without using the Laplace transformation. Such case has been investigated in \cite{EGBV}: 
using the spectral density of the operator $\DD$, a proof that the asymptotics of spectral action makes sense in 
the Ces\`aro sense, has been proposed when $M$ is compact.

Following \cite{Andrianov:2010nr}, let us take $f$ being a step-function and use the following integral representation.
\begin{align*}
f(z)=\theta (1-z) = \lim_{\epsilon\to +0} \tfrac 1{2\pi i}
\int_{-\infty}^{\infty} d\beta \,\tfrac{e^{i(1-z)\beta}}{\beta -i\epsilon},
\end{align*}
so that
\begin{equation*}
\Tr\big( f(\DD_A^2/\Lambda^2)-f(\DD^2/\Lambda^2)\big) =\lim_{\epsilon\to +0} \,
\tfrac 1{2\pi i} \int_{-\infty}^{\infty} d\beta \,\tfrac{e^{i\beta}}{\beta -i\epsilon}\,
\Tr \Bigl[ \exp \Bigl( -\tfrac {i\beta}{\Lambda^2} \DD_A^2 \Bigr)-\exp \Bigl( -\tfrac {i\beta}{\Lambda^2} \DD^2 \Bigr)\Bigr].
\end{equation*}

The coefficient in front of $\DD_A^2$ above is imaginary, so that we are dealing with the Schr\"odinger 
kernel rather than with the heat kernel. In principle, one should study anew the expansion of this kernel
in curvatures. We shall not go too deep into this problem. It is enough to mention that (i) the heat kernel 
for the free operator $\DD^2$ becomes the Schr\"odinger kernel for the same operator after analytic
continuation $t\to it$, and (ii) the Duhamel expansion, which may be used to derive the formulae like 
\eqref{traceheat} (see, e.g. \cite[Sec.\ 8.2]{Vassilevich:2003xt}) also survives the continuation to imaginary $t$.

In four dimensions, $m=2$, we therefore write
\begin{align*}
\Tr\Bigl[ \exp \Bigl( -\tfrac {i\beta}{\Lambda^2} \DD_A^2 \Bigr) -\exp \Bigl( -\tfrac {i\beta}{\Lambda^2} \DD^2 \Bigr)\Bigr]
\sim \tfrac 1{(4\pi)^2} \int d^4 p\, \tr\, 
\bigl[\hat F^{\mu\nu}(-p)\,\kappa(i\beta p^2 \Lambda^{-2})\, \hat F_{\mu\nu}(p)\bigr]
\end{align*}
to the second order of the field strength $F$. The formula (\ref{Squadra}) remains valid with  
$w_\Lambda(p^2)$ given by
\begin{align*}
w_\Lambda(p^2)=\lim_{\epsilon\to +0} \,\tfrac 1{2\pi i} \int_{-\infty}^{\infty} d\beta \,\tfrac{e^{i\beta}}{\beta -i\epsilon}
\,\kappa(i\beta p^2 \Lambda^{-2}) 
\end{align*}
According to (\ref{kappadef}), the function $\kappa(z)$ contains two terms. Let us denote by $w_{1,\Lambda}$ 
(resp., $w_{2,\Lambda}$) the contribution from $-h(z)$ (respectively, from $4q(z)$). 
\begin{eqnarray*}
 w_{1,\Lambda}&=&-\lim_{\epsilon\to +0} \,\tfrac 1{2\pi i}
\int_{-\infty}^{\infty} d\beta \, \tfrac{e^{i\beta}}{\beta -i\epsilon}\,h(i\beta p^2 \Lambda^{-2})\\
&=&-\lim_{\epsilon\to +0} \,\tfrac 1{2\pi i}
\int_{-\infty}^{\infty} d\beta\int_0^1 d\alpha\, \tfrac{e^{i\beta}}{\beta -i\epsilon}
\exp\big(-i\beta \alpha(1-\alpha) p^2 \Lambda^{-2}\big)=-\int_0^1d\alpha \, \theta_\alpha\,,
\end{eqnarray*}
where we introduced a short-hand notation
\begin{equation*}
 \theta_\alpha :=\theta \bigl( 1 -\alpha(1-\alpha) p^2 \Lambda^{-2}\bigr)\,.
\end{equation*}
To evaluate $w_{2,\Lambda}$, we first observe that $q(z)$ does not have a pole at $z=0$. Therefore, one can 
deform the contour of integration by replacing the interval $[-\epsilon/2,\epsilon/2]$ by a semicircle in the 
upper half plane centered at $\beta=0$ of a radius $\epsilon/2$. We denote the deformed contour
by $C$. 
Then we have
\begin{eqnarray}
&&w_{2,\Lambda}=\lim_{\epsilon\to +0} \,\tfrac 1{2\pi i}
\int_C d\beta \, \tfrac{e^{i\beta}}{\beta -i\epsilon} \, 
\tfrac{2(1-h(i\beta p^2\Lambda^{-2}))}{i\beta p^2\Lambda^{-2}}\nonumber\\
&&\qquad = \lim_{\epsilon\to +0} \,\tfrac 1{2\pi i}
\int_C d\beta \, \tfrac{e^{i\beta}}{\beta -i\epsilon} \int_0^1 d\alpha 
\tfrac{2}{i\beta p^2\Lambda^{-2}} \Bigl[ 1 -\exp (-i\beta \alpha (1-\alpha)p^2\Lambda^{-2})
\Bigr]\,. \nonumber
\end{eqnarray}
Next, we integrate over $\beta$. In the integral of the first (constant) term in the square brackets, the contour 
should be closed in the upper half plane, so that the pole at $\beta=i\epsilon$ contributes. To integrate the 
second term  in square brackets, we close the contour upwards for positive arguments of $\theta_\alpha$, 
obtaining a contribution proportional to the residue at $\beta=i \epsilon$. For negative arguments of $\theta_\alpha$, 
the contour has to be closed in the lower half-plane, giving rise to a contribution proportional to $(1-\theta_\alpha)$ 
and to a residue at $\beta=0$. Collecting all together, we obtain
\begin{equation*}
w_{2,\Lambda}=\int_0^1d\alpha\, \lim_{\epsilon\to +0} \Bigl( -\tfrac{2e^{-\epsilon}}{\epsilon p^2\Lambda^{-2}}
+\theta_\alpha \tfrac{2e^{-\epsilon}}{\epsilon p^2\Lambda^{-2}} 
\exp( \epsilon \alpha (1-\alpha) p^2\Lambda^{-2}) 
-(1-\theta_\alpha) \tfrac{-2}{\epsilon p^2\Lambda^{-2}} \Bigr)\,.
\end{equation*}
In the limit $\epsilon\to 0$ we have
\begin{align*}
w_{2,\Lambda}=\int_0^1d\alpha\, \bigl[ 2\alpha (1-\alpha) \theta_\alpha 
+2\Lambda^2 p^{-2} (1-\theta_\alpha)\bigr]\,.
\end{align*}
One can check that $w_{\Lambda}=w_{1,\Lambda}+w_{2,\Lambda}$ agrees with (\ref{w4d}). The integration over 
$\alpha$ can be performed giving
\begin{eqnarray*}
&2|p|<\Lambda:&\quad w_{\Lambda}(p^2)=-\tfrac 23 \\
&2|p|\ge \Lambda:&\quad w_{\Lambda}(p^2)= 2\bigl[ -\alpha_c +\tfrac {2\Lambda^2}{p^2}
\bigl( \tfrac 12 -\alpha_c\bigr)+\alpha_c^2 -\tfrac 23 \alpha_c^3 \,\bigr]\,,
\end{eqnarray*}
where $\alpha_c:=\tfrac 12 \big( 1-\sqrt{1-\tfrac{4\Lambda^2}{p^2}} \,\,\big)$.

Also the large $p$ asymptotic behavior $w_\Lambda(p^2)\sim -4\Lambda^4/p^4 +\mathcal{O}(1/p^6)$ agrees with 
\eqref{wlarge} although the step-function does not satisfy assumptions of Theorem \ref{Result}.

We conclude that the Laplace transform is not strictly necessary. We can even conjecture that the formulae \eqref{w} 
and (\ref{wlarge}) are valid for arbitrary piecewise continuous functions $f$ satisfying the fall-off condition 
(\ref{hypothesis}).

\section{\texorpdfstring{Higher terms in $F$}{Higher terms in F}}
\label{sec-ht}

Since the heat kernel \eqref{traceheat} is an invariant functional in the background field, it is expandable in the 
basis of invariants of any order. This has been explicitly computed in \cite{BGVZ,BGVZ1} up to order 3 in 
the curvature $F$:
\begin{align}
\tilde K(s,\DD_A^2)(F)=\tfrac{1}{(4\pi s )^m}  \int_M d^{2m}x \
\tr\, \Big\{sE&+ s^2\,E\,\tfrac{1}{2}h(-s\partial^2)\,E  + s^2\,F_{\mu\nu}\, q(-s\partial^2) \,F^{\mu\nu} \nonumber   \\
& + s^3 \sum_{i=1}^{11} \kappa_i(-s\partial^2_1,-s\partial^2_2,-s\partial^2_3)(R_1R_2R_3)(i)\nonumber \\
&+s^4 \sum_{i=12}^{25} \kappa_i(-s\partial^2_1,-s\partial^2_2,-s\partial^2_3)(R_1R_2R_3)(i)\label{3rd}\\
&+s^5 \sum_{i=26}^{28} \kappa_i(-s\partial^2_1,-s\partial^2_2,-s\partial^2_3)(R_1R_2R_3)(i)\nonumber\\
&+s^6 \,\,\kappa_{29}(-s\partial^2_1,-s\partial^2_2,-s\partial^2_3)(R_1R_2R_3)(29 )\quad 
+\mathcal{O}\big(F^4\big)(s) \Big\} \nonumber
\end{align}
where the list of functions $\kappa_i$ is known: they are constructed like previous 
$\kappa_0(x)=2^m h'(x)$ through the function 
\begin{equation*}
h(x_1,x_2,x_3) \vc \int_{(R^+)^3}d\a_1\,d\a_2\,d\a_3\,\delta(1-\a_1-\a_2-\a_3)\,
e^{-\a_1\a_2x_3-\a_2\a_3x_1-\a_1\a_3x_2}.
\end{equation*} 
The formal expression $R_1R_2R_3$ means one of the cross-terms in $F$ and other curvatures that can appear 
in the computation.

Exactly the same method as before can be applied. While $w_\Lambda(p^2)$ defines behavior of 
the propagator and partially of the $A^3$ and $A^4$ vertexes, the action (\ref{3rd}) will define higher vertex 
functions.

\section{Conclusions}
\label{sec-con}

In this paper we considered the variation of spectral action for a commutative spectral triple perturbed by a gauge 
potential. We calculated this variation to the second order in field strength. Our main results are the 
remarkably simple formula (\ref{w}) for the action and its universal asymptotics, Theorem \ref{Result}. 

Although we used the Laplace transform at intermediate steps, it seems to be unnecessary, as suggests the  
example of step function regularization. Anyway, it would be interesting to obtain the results 
without relying on the Laplace transform. Other open directions for further research include
extensions to odd dimensions and to higher orders in the field strength.

\section*{Acknowledgments}

This work was started during the visit of the first author (B. I.) to S\~ao Paulo which was supported by FAPESP. 
The second author (C. L.) was supported by the DNRF through the Centre for Symmetry and Deformation in 
Copenhagen. The third author (D. V.) was supported in part by CNPq and FAPESP.

\end{document}